\newtheorem{lemma}{Lemma}
\newtheorem{proposition}{Proposition}
\newtheorem{assumption}{Assumption}
\title{Smart Contract Adoption in Derivative Markets under Bounded Risk:\\
An Optimization Approach}
\author[1]{Jinho Cha}
\author[2]{Long Pham}
\author[3]{Thi Le Hoa Vo\thanks{Corresponding author: thi-le-hoa.vo@univ-rennes.fr}}
\author[4]{Jaeyoung Cho}
\author[5]{Jaejin Lee}
\affil[1]{Department of Computer Science, Gwinnett Technical College, Lawrenceville, GA, USA}
\affil[2]{Department of Decision Sciences and Economics, Texas A\&M University--Corpus Christi, TX, USA}
\affil[3]{IGR-IAE Rennes, Universit\'e de Rennes, CREM UMR CNRS 6211, Rennes, France}
\affil[4]{Department of Computer Science, Prairie View A\&M University, Prairie View, TX, USA}
\affil[5]{Intel Corporation, Chandler, AZ, USA}
\date{} 
\begin{document}

\maketitle

\begin{abstract}
This study develops and analyzes an optimization model of smart contract adoption under bounded risk, 
linking structural theory with simulation and real-world validation. 
We examine how adoption intensity and profitability respond to threshold costs, variance shocks, and 
distributional changes. Comparative statics and Monte Carlo experiments show that adoption intensity 
$\alpha^\star$ is structurally pinned at a boundary solution, invariant to variance and heterogeneity, 
while profitability and service outcomes are variance-fragile, eroding under volatility and heavy-tailed demand. 
A sharp threshold in the fixed-cost parameter $A_3$ triggers discontinuous adoption collapse (H1), 
variance shocks reduce profits monotonically but not adoption (H2), and additional results on readiness 
heterogeneity (H3), profit--service co-benefits (H4), and distributional robustness (H5) confirm the duality 
between stable adoption and fragile payoffs. External validity checks further establish convergence of 
sample-average approximation at the canonical $\mathcal{O}(1/\sqrt{N})$ rate (H6). Empirical validation using 
S\&P500 returns and the MovieLens-100K dataset corroborates the theoretical structure: bounded and heavy-tailed 
distributions fit better than Gaussian models, and profits diverge across volatility regimes even as adoption 
remains stable. Taken together, the results demonstrate that adoption choices are robust to uncertainty, 
but their financial consequences are highly fragile. For operations and finance, this duality underscores the 
need for risk-adjusted performance evaluation, option-theoretic modeling, and distributional stress testing in 
strategic investment and supply chain design.
\end{abstract}

\textbf{Keywords:} Smart Contracts; Bounded Risk; Derivative Markets; Optimization; Threshold Effects; Variance Fragility; Robustness


\section{Introduction}

Derivative markets have expanded substantially in recent years, with structured products and risk-transfer instruments becoming increasingly common. These instruments allow market participants to hedge exposures and share risks more effectively, yet they also raise persistent concerns about transparency and counterparty exposure. Several financial episodes have illustrated how limited contract enforcement or monitoring can amplify systemic stress, underscoring the need for more reliable settlement mechanisms \citep{bolton2005contract,cachon2003supply}. In this context, digital technologies such as blockchain and smart contracts have been proposed as potential tools to enhance contract reliability and reduce frictions in clearing and settlement \citep{catalini2016economics,cong2019blockchain,schar2021defi}.

The idea of embedding smart contracts into derivative transactions is appealing, but adoption remains relatively limited. Market participants face a fundamental trade-off: efficiency and transparency gains must be weighed against the costs of technological implementation, integration with legacy systems, and coordination across counterparties \citep{tapscott2017blockchain}. Moreover, most existing modeling approaches assume heavy-tailed or unbounded risk distributions, which emphasize extreme events but may overstate volatility in settings where exposures are naturally capped. In practice, mechanisms such as margin rules, collateral requirements, and regulatory limits frequently impose bounds on both losses and gains \citep{embrechts1997extremal,johnson1995continuous}. 

To capture such bounded exposures in a more realistic manner, we adopt the lognormal distribution. Lognormal distributions are strictly non-negative and positively skewed, making them particularly suitable for modeling derivative exposures where downside losses are limited but upside shocks can be extreme. This property provides a natural representation of demand and settlement risk in financial markets, where asymmetry and fat tails are empirically observed \citep{embrechts1997extremal}. Recent studies also emphasize the importance of distributional choice in variance and tail risk modeling across asset classes, reinforcing the need for flexible yet tractable specifications \citep{fathi2025variance}. Alongside robustness checks using beta distributions, this choice allows us to examine adoption incentives under realistic bounded risk environments.

The purpose of this paper is to develop a tractable optimization framework for smart contract adoption under bounded risk. We introduce an adoption intensity decision variable and formulate the associated profit maximization problem under lognormal demand. The model remains convex and analytically interpretable, which allows us to characterize optimality conditions and highlight threshold effects in adoption. In particular, we show how adoption collapses or accelerates depending on the interplay between adoption costs and supplier digital readiness. 

Our analysis proceeds in two steps. First, we establish the baseline convex optimization model and derive conditions for positive adoption intensity. Second, we conduct numerical experiments using synthetic data, including sensitivity analysis with lognormal and beta demand distributions \citep{shapiro2014stochastic,rockafellar2000cvar}. This design allows us to provide comparative statics and counterfactual scenarios, including cost shocks and risk shocks, that speak directly to managerial and policy questions. 

Overall, our contribution is incremental but useful. We extend the contract theory and digital adoption literature by embedding lognormal risk explicitly into the adoption decision, and by showing how threshold effects and supplier heterogeneity shape equilibrium outcomes. While our framework is stylized and based on synthetic data, it offers a transparent and interpretable foundation that can inform both future empirical validation and regulatory discussions of digital adoption in derivative markets. 

Finally, based on this motivation, we formulate a set of testable hypotheses (Section~\ref{sec:hypotheses}) that directly link adoption cost, risk variance, supplier readiness, and distributional assumptions to observable adoption outcomes. These hypotheses guide our experimental design and provide a rigorous structure for interpreting numerical results.

\section{Literature Review}

\subsection{Financial Contract Design}

The study of financial contract design has long emphasized mechanisms to align incentives and distribute risk among contracting parties. Classical models consider arrangements such as buyback contracts, revenue-sharing agreements, and quantity-flexibility provisions \citep{cachon2003supply,lariviere2001selling}. These frameworks typically assume enforcement is guaranteed through legal systems, reputation, or repeated interaction \citep{bolton2005contract,tirole1999incomplete}. More recent work has extended contract theory into supply chains and financial markets, exploring issues of renegotiation, moral hazard, and dynamic incentives \citep{narayanan2004aligning,bernheim1987coalition,maskin2001implementation}. 

A parallel stream of literature has examined incomplete contracts and relational enforcement, emphasizing the role of trust and long-term cooperation \citep{grossman1986costs,hart1990property,baker2002relational}. These insights are particularly relevant for modern markets where counterparties are diverse, information asymmetry is prevalent, and enforcement may be imperfect. 

Building on these foundations, recent studies have started to integrate digital execution and smart contracting into the analysis of contract design. For instance, \citet{atalay2019blockchain} examine how blockchain-enabled settlement alters counterparty incentives, while \citet{harvey2021defi} and \citet{broby2022financial} highlight how decentralized finance (DeFi) changes the enforcement environment by embedding rules directly into code. Very recent work such as \citet{fanti2022blockchain}, \citet{chen2023smart}, and \citet{wang2025contract} stresses the importance of bounded enforcement costs and digital readiness when contracts are executed on distributed ledgers. These contributions suggest that the adoption of smart contracts requires rethinking classical enforcement assumptions, especially in settings where risk exposures are capped by regulation or collateral. 

Our work connects to this evolving stream by extending the classical contract design perspective into a digital execution setting with explicit bounded risk. In particular, we study how the introduction of smart contracts reshapes enforcement and monitoring, while also interacting with supplier heterogeneity and adoption thresholds.

\subsection{Blockchain and Smart Contracts}

The emergence of blockchain technologies introduces the possibility of programmable enforcement. Smart contracts can automate settlement, reduce transaction costs, and improve transparency \citep{catalini2016economics,cong2019blockchain}. Early theoretical work emphasizes their ability to mitigate information asymmetry and reduce reliance on centralized intermediaries \citep{tapscott2017blockchain}. 

Yet, adoption outcomes remain mixed. Case studies and sectoral analyses indicate substantial heterogeneity: while some industries have realized efficiency gains, others report limited improvements once integration and compliance costs are considered \citep{schar2021defi,arora2021blockchain}. Empirical work in financial markets highlights that integrating blockchain into trading and derivatives settlement is non-trivial, as high implementation costs and strategic concerns frequently limit adoption \citep{arora2021blockchain}. 

Recent surveys and comparative studies reinforce this point, noting that coordination costs, digital readiness, and regulatory uncertainty are persistent barriers to scaling smart contract use in finance \citep{shaiku2025blockchain,fanti2022blockchain}. This evidence suggests that adoption decisions are highly context-dependent and cannot be explained by efficiency gains alone. Instead, digital adoption must be analyzed jointly with risk exposure, supplier heterogeneity, and bounded enforcement environments—precisely the dimensions addressed in our optimization framework.

\subsection{Risk Modeling in Financial Markets}

A second relevant stream concerns risk modeling. Classical portfolio and contract studies often rely on normal or heavy-tailed distributions such as Pareto or lognormal to capture extreme outcomes \citep{embrechts1997extremal,mandelbrot1963variation}. While useful for systemic risk analyses, these unbounded models may overstate volatility where contractual or regulatory features cap exposures \citep{johnson1995continuous,glasserman2005importance}. In practice, mechanisms such as margin rules, collateral requirements, and regulatory limits impose upper and lower bounds on realized payoffs \citep{duffie2010dark,brunnermeier2014risk}. 

More recent work examines variance risk and tail properties across multiple asset classes, emphasizing how distributional choice affects inference about shocks and systemic fragility \citep{andersen2015risk,barndorff2008multivariate,fathi2025variance}. This literature highlights that modeling assumptions about boundedness and skewness can meaningfully change conclusions about optimal hedging and adoption incentives. 

Among alternative specifications, lognormal distributions are particularly relevant because they are strictly non-negative and positively skewed, matching empirical features of derivative exposures where downside is limited but upside shocks can be large \citep{embrechts1997extremal,cont2001empirical}. Beta distributions also offer flexibility in representing bounded outcomes and have been applied in option pricing and risk-limiting contracts \citep{carmona2014financial,stoyanov2011fat}. 

Yet these bounded models have been rarely integrated into the study of digital contract adoption, leaving a gap that our work addresses by embedding lognormal and beta risk explicitly into the optimization of smart contract adoption.

\subsection{Positioning of This Study}

Our work builds on these literatures in three complementary ways. 

First, relative to the financial contract design literature, we extend the analysis of adoption incentives into a digital execution environment, where enforcement is programmed rather than legally adjudicated. This shift allows us to study how classical incentive alignment results change when contracts are executed by smart contracts under bounded risk. 

Second, relative to blockchain adoption studies, we move beyond descriptive or case-based accounts by providing an analytical optimization framework that explicitly incorporates adoption costs, supplier readiness, and threshold effects. This enables a sharper characterization of when adoption collapses or accelerates. 

Third, relative to risk modeling studies, we embed bounded lognormal risk directly into the adoption decision. Unlike prior work that focused either on incentive alignment or on digital trust, our framework highlights how convex optimization under bounded risk produces threshold adoption behavior, with clear implications for managerial decision-making and regulatory design.

\section{Research Hypotheses}
\label{sec:hypotheses}
Drawing on the theoretical formulation and literature synthesis, we propose six testable hypotheses that guide our numerical experiments and empirical validation:

\begin{description}
\item[H1 (Threshold).] As adoption cost decreases, adoption intensity $\alpha$ exhibits a sharp jump at a critical threshold, rather than a smooth increase. 
\emph{(Tested by varying $A_3$ and identifying discontinuities in $\alpha^\star$.)}

\item[H2 (Risk–Adoption).] As demand variance $\sigma$ increases, the optimal adoption level $\alpha^\star$ increases monotonically. 
\emph{(Tested by simulating lognormal demand with alternative variance parameters.)}

\item[H3 (Readiness).] Higher average or variance in supplier readiness $\beta_i$ significantly increases $\alpha^\star$ and expected profit. 
\emph{(Tested by varying the support of $\beta_i$ across replications.)}

\item[H4 (Service co-benefit).] Moderate adoption levels simultaneously improve profit and service performance (Fill Rate). 
\emph{(Tested by computing profit–Fill Rate trade-off curves.)}

\item[H5 (Distribution).] Even with the same mean and variance, heavy- vs.\ light-tailed distributions significantly shift the threshold location of $\alpha^\star$. 
\emph{(Tested by comparing lognormal and beta demand distributions.)}

\item[H6 (External validity).] SAA/Monte Carlo approximations converge at rate $O(1/\sqrt{N})$ and preserve adoption thresholds as $N$ grows. 
\emph{(Tested by varying sample size $N$ and tracking the stability of $\alpha^\star$ and profit rankings.)}
\end{description}

\begin{sidewaystable}[htbp]
\centering
\footnotesize
\caption{Dependency of assumptions across hypotheses (H1–H6), grouped by section}
\label{tab:assumption-dependency}
\renewcommand{\arraystretch}{1.25}
\setlength{\tabcolsep}{4pt}
\begin{tabular}{lccccccc l}
\toprule
\textbf{Assumption} 
& \rotatebox{70}{H1 Threshold} 
& \rotatebox{70}{H2 Variance} 
& \rotatebox{70}{H3 Readiness} 
& \rotatebox{70}{H4 Service Co-benefit} 
& \rotatebox{70}{H5 Distribution} 
& \rotatebox{70}{H6 External Validity} 
& \textbf{Notes} \\
\midrule
\multicolumn{8}{l}{\textbf{Model Assumptions (Section~\ref{sec:model})}} \\
Assump.~1 (Demand regularity) & $\checkmark$ & $\checkmark$ & $\checkmark$ & $\checkmark$ & $\checkmark$ & $\checkmark$ & Continuous, finite mean demand (lognormal/beta) \\
Assump.~2 (Concavity and boundedness) & $\checkmark$ & $\checkmark$ & $\checkmark$ & $\checkmark$ & $\checkmark$ & $\checkmark$ & Convexity, bounded optimization \\
Assump.~3 (Strict monotonicity) & $\checkmark$ &  &  &  & $\checkmark$ & $\checkmark$ & Ensures unique $Q^\star$, threshold stability \\
\midrule
\multicolumn{8}{l}{\textbf{Synthetic Data and Simulation (Section~\ref{sec:synthetic-data})}} \\
Lognormal demand (baseline) & $\checkmark$ & $\checkmark$ &  & $\checkmark$ &  &  & Captures bounded skewed exposures \\
Beta demand (robustness) &  &  &  &  & $\checkmark$ &  & Distributional robustness check \\
Monte Carlo approximation (SAA) &  &  &  &  &  & $\checkmark$ & Convergence of simulation estimates \\
\bottomrule
\end{tabular}
\end{sidewaystable}

\section{Model Formulation}
\label{sec:model}

\subsection{Decision Variables}

\subsubsection{Adoption Intensity}
Let $\alpha \in [0,1]$ denote the proportion of smart-contract adoption. 
Here, $\alpha=0$ corresponds to no adoption, while $\alpha=1$ indicates full adoption.

\subsubsection{Order Quantity}
Let $q_i \ge 0$ be the order from supplier $i$, and let the aggregate order be 
$Q=\sum_i q_i$. The firm chooses both $(q_i)$ and $\alpha$ simultaneously.

\subsection{Objective Function}

We adopt a newsvendor-style expected profit formulation with salvage value $s\ge 0$ for overage and shortage penalty $r\ge 0$:
\begin{align}
\max_{\alpha\in[0,1],\,q\ge 0}\ \Pi(\alpha,q)
&= \mathbb{E}\!\Big[\,p\,\min(Q,D)\;+\;s\,(Q-D)^+\;-\;r\,(D-Q)^+\,\Big] \nonumber\\
&\quad - \sum_i c(\alpha,\beta_i)\,q_i\;-\;\psi(\alpha).
\label{eq:obj}
\end{align}

\paragraph{Standing assumptions.}
We impose three assumptions to ensure the problem is well-posed.

\begin{assumption}[Demand regularity]
The demand $D$ is continuous with cumulative distribution function $F$ and density $f$, and satisfies $\mathbb{E}[D]<\infty$. 
This condition holds for standard families such as lognormal or truncated lognormal, which we adopt in simulations.
\end{assumption}

\begin{assumption}[Concavity and boundedness]
Assume $s\le p+r$ and
\begin{equation}
s \ \le\ \underline c
\ :=\ \inf_{\alpha\in[0,1]}\min_{i} c(\alpha,\beta_i)
\ =\ \min_{i}\big(c_{0i}-A_1 - A_2\,\beta_i\big).
\end{equation}
The first inequality ensures that revenue is concave in $Q$, while the second prevents ``salvage arbitrage'' 
(i.e., infinite ordering). Together, these guarantee the optimization problem is bounded and well-posed.
\end{assumption}

\begin{assumption}[Strict monotonicity for uniqueness]
$F$ is strictly increasing, which is satisfied for lognormal and other continuous demand distributions. 
If demand is truncated at an upper bound $b$, then subgradient conditions apply at $Q=b$; 
all subsequent results remain valid using the natural subdifferential.
\end{assumption}

\paragraph{Concavity in \texorpdfstring{$(Q,\alpha)$}{(Q,alpha)}.}
Define the marginal expected revenue
\[
M(Q) \;=\; \frac{\partial}{\partial Q}\,
\mathbb{E}\!\Big[\,p\min(Q,D)+s(Q-D)^+-r(D-Q)^+\,\Big].
\]
A standard calculation yields
\[
M(Q) = (p+s)\,(1-F(Q)) - (r-s)\,F(Q).
\]
Since $M(Q)$ is nonincreasing when $s \le p+r$, the revenue function is concave in $Q$.  
The procurement cost term $-\sum_i c(\alpha,\beta_i)q_i$ is affine in $(\alpha,q)$, and the adoption cost 
$-\psi(\alpha)$ is concave since $\psi(\alpha)$ is convex.  
Thus, the objective $\Pi(\alpha,q)$ in \eqref{eq:obj} is concave in $(Q,\alpha)$ over the convex feasible region.  
The problem therefore reduces to a convex optimization, ensuring existence of a global maximizer.

\paragraph{First-order conditions.}
The optimal order $Q^\star$ satisfies the critical fractile condition
\begin{equation}
F(Q^\star) \;=\; \frac{p-s}{p+r-s},
\label{eq:fractile}
\end{equation}
which is the standard newsvendor solution, independent of $\alpha$.  
Thus, adoption does not affect $Q^\star$ directly, but only indirectly through the effective cost structure.  
Given $Q^\star$, the optimal adoption $\alpha^\star$ solves
\begin{equation}
\max_{\alpha\in[0,1]} \; - \Big(c_{0}-A_1\alpha - A_2 \bar\beta \Big) Q^\star \;-\; A_3 \alpha^\nu,
\label{eq:alpha_opt}
\end{equation}
where $\bar\beta$ is the mean readiness across suppliers.  
The first-order condition for interior $\alpha$ is
\[
A_1 Q^\star = \nu A_3 \alpha^{\nu-1}.
\]
If this equality cannot be satisfied within $\alpha\in[0,1]$, the solution collapses to the boundary $\alpha^\star=0$ or $\alpha^\star=1$.  
In our calibration, the inequality holds only at the lower boundary, producing the corner solution $\alpha^\star=0.05$ observed in simulation.

\subsection{Cost Functions}

\subsubsection{Procurement Cost}
We assume the effective unit cost from supplier $i$ decreases both with adoption intensity $\alpha$ 
and supplier readiness $\beta_i$:
\begin{equation}
c(\alpha,\beta_i) \;=\; c_{0i} - A_1\,\alpha - A_2\,\beta_i ,
\label{eq:proc_cost}
\end{equation}
where $c_{0i}$ is the baseline procurement cost, $A_1>0$ measures marginal cost reduction from adoption, 
and $A_2>0$ captures the effect of supplier readiness. 
Hence, higher adoption and greater readiness jointly lower effective procurement costs.

\subsubsection{Adoption Cost}
Adoption incurs a convex integration cost:
\begin{equation}
\psi(\alpha) \;=\; A_3\,\alpha^{\nu}, 
\qquad A_3>0,\;\; \nu>1,
\label{eq:adopt_cost}
\end{equation}
so that $\psi$ is increasing and strictly convex in $\alpha$. 
This form captures the empirically observed fact that marginal integration costs rise 
with adoption intensity due to compatibility and coordination frictions.

\subsection{Risk Distribution}

\subsubsection{Lognormal Demand (with optional cap)}
Market demand $D$ is assumed to follow a positively skewed, nonnegative distribution:
\begin{equation}
D \;\sim\; \mathrm{Lognormal}(m,v), 
\qquad \text{with cumulative distribution function } F(\cdot).
\label{eq:demand_log}
\end{equation}
The lognormal captures both nonnegativity and heavy right tails, 
consistent with empirical evidence on derivative exposures. 

If institutional or contractual caps apply, we instead consider the truncated form
\begin{equation}
\tilde D \;=\; \min\{D,b\}, 
\label{eq:demand_cap}
\end{equation}
where $b>0$ is the upper bound. At $Q=b$, first-order conditions are interpreted in terms of the natural 
subdifferential of $F$ (see Assumption~(A3)).

\subsection{Analytic Derivatives and KKT System}

\subsubsection{Marginal Expected Revenue in \texorpdfstring{$Q$}{Q}}
\begin{lemma}[Marginal revenue in aggregate order]\label{lem:marginal}
Suppose Assumptions (A1)--(A2) hold and $D$ is continuous with CDF $F$ and PDF $f$. 
Then the derivative of the expected stage revenue with respect to $Q$ is
\begin{align}
\frac{\partial}{\partial Q}\,
\mathbb{E}\!\left[p\min(Q,D)+s(Q-D)^+-r(D-Q)^+\right]
&= (p+r)\big(1-F(Q)\big) \nonumber\\
&\quad +\; s\,F(Q)\;=:\;M(Q).
\label{eq:MQ}
\end{align}
Moreover, $M'(Q)=(s-(p+r))f(Q)\le 0$ when $s\le p+r$. 
Hence the expected revenue is concave in $Q$, strictly concave when $s<p+r$ and $f(Q)>0$ on a set of positive measure.
\end{lemma}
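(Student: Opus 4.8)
The plan is to differentiate the three revenue components term by term, justify the interchange of derivative and expectation by a uniform Lipschitz bound, and then read off the sign of $M'$ from Assumption~(A2).

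First I would write the expected stage revenue as $R(Q) = p\,\mathbb{E}[\min(Q,D)] + s\,\mathbb{E}[(Q-D)^+] - r\,\mathbb{E}[(D-Q)^+]$ and note that each expectation is finite: the first two terms are bounded above by $Q$, and the third by $\mathbb{E}[D]<\infty$ under Assumption~(A1). For the differentiation, observe that for every fixed $d\ge 0$ the maps $Q\mapsto\min(Q,d)$, $Q\mapsto(Q-d)^+$, and $Q\mapsto(d-Q)^+$ are $1$-Lipschitz, so the difference quotients $\big(g(Q+h,d)-g(Q,d)\big)/h$ are bounded in absolute value by $1$ uniformly in $h\neq 0$ and in $d$; the constant $1$ is integrable, so dominated convergence licenses differentiating under the expectation sign.

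Next I would compute the pointwise derivatives: for every $d\neq Q$ one has $\partial_Q\min(Q,d)=\mathbbm{1}\{d>Q\}$, $\partial_Q(Q-d)^+=\mathbbm{1}\{d<Q\}$, and $\partial_Q(d-Q)^+=-\mathbbm{1}\{d>Q\}$. Taking expectations and using that continuity of $F$ gives $\mathbb{P}(D=Q)=0$, I obtain $R'(Q)=p\big(1-F(Q)\big)+s\,F(Q)+r\big(1-F(Q)\big)=(p+r)\big(1-F(Q)\big)+s\,F(Q)=M(Q)$, which is \eqref{eq:MQ}.

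Finally, rewriting $M(Q)=(p+r)+\big(s-(p+r)\big)F(Q)$ and differentiating (valid since $F$ is absolutely continuous with density $f$) yields $M'(Q)=\big(s-(p+r)\big)f(Q)$. Under Assumption~(A2), $s\le p+r$, so $M'(Q)\le 0$ and $R$ is concave in $Q$; if in addition $s<p+r$ and $\{f>0\}$ has positive measure, then $M'$ is strictly negative there, giving strict concavity. The only genuinely delicate point is the dominated-convergence justification for differentiating inside the expectation, and the uniform $1$-Lipschitz bound disposes of it cleanly; the truncated-demand case of Assumption~(A3) is identical except that $F$ jumps at $b$, where the one-sided derivatives of $M$ are read as elements of the natural subdifferential.
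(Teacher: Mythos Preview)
Your proof is correct and follows essentially the same route as the paper's: compute the pointwise derivatives of the three pieces, invoke dominated convergence to swap derivative and expectation, use continuity of $F$ to identify the probabilities with $F(Q)$ and $1-F(Q)$, and then read off the sign of $M'(Q)$ from $s\le p+r$. Your justification of the interchange via the uniform $1$-Lipschitz bound is in fact more explicit than the paper's, which simply asserts that dominated convergence applies.
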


\begin{proof}
By dominated convergence, differentiation under the expectation is valid. 
For any $Q$ at continuity points of $F$:
\begin{align}
\frac{d}{dQ}\min(Q,D) &= \mathbbm{1}\{D \ge Q\}, \\
\frac{d}{dQ}(Q-D)^+   &= \mathbbm{1}\{Q > D\}, \\
\frac{d}{dQ}(D-Q)^+   &= -\,\mathbbm{1}\{D > Q\}.
\end{align}

Taking expectations yields
\[
p\,\mathbb{P}(D\ge Q)+s\,\mathbb{P}(D<Q)+r\,\mathbb{P}(D>Q).
\]
Since $D$ is continuous, $\mathbb{P}(D\ge Q)=\mathbb{P}(D>Q)=1-F(Q)$ and $\mathbb{P}(D<Q)=F(Q)$.  
Therefore
\[
M(Q)=(p+r)(1-F(Q))+sF(Q).
\]
Differentiating gives $M'(Q)=(s-(p+r))f(Q)$, which is weakly negative whenever $s\le p+r$.  
Thus expected revenue is concave in $Q$ and strictly concave when $s<p+r$ with $f(Q)>0$ on a set of positive measure.
\end{proof}

\paragraph{Interpretation.}
The marginal revenue $M(Q)$ declines monotonically in $Q$, exactly as in the newsvendor model.  
Hence the optimal order size equates this decreasing curve with supplier costs.  
This provides the basis for the KKT characterization below.

\medskip

\noindent Using $M(Q)$, the stationarity condition for each supplier order $q_i$ (multiplier $\lambda_i\ge 0$ for $q_i\ge 0$) is
\begin{equation}
\frac{\partial \Pi}{\partial q_i}
= M(Q)\cdot \frac{\partial Q}{\partial q_i} - c(\alpha,\beta_i) + \lambda_i
= M(Q)-c(\alpha,\beta_i)+\lambda_i
= 0,
\label{eq:KKT_qi}
\end{equation}
with complementary slackness $\lambda_i q_i=0$ and feasibility $q_i\ge 0$.

\paragraph{Active-set characterization.}
At the optimum, any supplier $i$ with $q_i>0$ must satisfy 
\[
c(\alpha,\beta_i)=M(Q^\star),
\]
while suppliers with $q_i=0$ satisfy $c(\alpha,\beta_i)\ge M(Q^\star)$.  
\emph{In the absence of capacity constraints}, this implies that orders concentrate on the cheapest suppliers $\min_i c(\alpha,\beta_i)$, with proportional allocation in case of ties.\footnote{If diversification is required, one can impose supplier capacities $0\le q_i\le \bar q_i$ or introduce convex supplier-specific cost functions. The KKT characterization extends directly.}

\subsection{Optimal adoption intensity \texorpdfstring{$\alpha^\star$}{alpha*}}

\begin{proposition}[Closed-form solution for \texorpdfstring{$\alpha^\star$}{alpha*}]\label{prop:alpha}
Let Assumptions (A1)--(A3) hold. Then the KKT condition for adoption intensity is
\begin{equation}
\frac{\partial \Pi}{\partial \alpha}
= A_1\sum_i q_i - \psi'(\alpha) + \gamma^+ - \gamma^- = 0,
\label{eq:KKT_alpha}
\end{equation}
with multipliers $\gamma^+\ge 0$ (for $\alpha\le 1$) and $\gamma^-\ge 0$ (for $\alpha\ge 0$).  

The optimal adoption intensity is
\begin{equation}
\alpha^\star=
\begin{cases}
0, & \text{if }\ \psi'(0^+)\ \ge A_1\sum_i q_i,\\[2mm]
1, & \text{if }\ \psi'(1^-)\ \le A_1\sum_i q_i,\\[2mm]
\left(\dfrac{A_1\sum_i q_i}{A_3\,\nu}\right)^{\!\frac{1}{\nu-1}}, & \text{otherwise}.
\end{cases}
\label{eq:alpha_cases_closed}
\end{equation}
In particular, when $\psi(\alpha)=A_3\alpha^\nu$ with $\nu>1$,
\begin{equation}
\alpha^\star_{\mathrm{int}}
=\left(\frac{A_1\sum_i q_i}{A_3\,\nu}\right)^{\!\frac{1}{\nu-1}}, 
\qquad 
\alpha^\star=\min\{1,\max\{0,\alpha^\star_{\mathrm{int}}\}\}.
\label{eq:alpha_closed}
\end{equation}
\end{proposition}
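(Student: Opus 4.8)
The plan is to exploit the joint concavity of $\Pi(\alpha,q)$ in $(Q,\alpha)$ — already established above — to reduce the adoption subproblem to a one-dimensional concave program, and then to read the solution off its KKT system. First I would fix the optimal order profile: by Lemma~\ref{lem:marginal} and the fractile identity \eqref{eq:fractile}, the aggregate optimal order $Q^\star=\sum_i q_i^\star$ is pinned down independently of $\alpha$, so the residual dependence of $\Pi$ on $\alpha$ reduces to the program \eqref{eq:alpha_opt}, i.e.\ $\max_{\alpha\in[0,1]}g(\alpha)$ with $g(\alpha):=A_1\big(\sum_i q_i\big)\alpha-\psi(\alpha)+\mathrm{const}$. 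Since $\psi$ is convex (here $\psi(\alpha)=A_3\alpha^\nu$ with $\nu>1$, hence strictly convex) and the first term is affine, $g$ is concave on $[0,1]$; therefore stationarity together with complementary slackness is both necessary and sufficient for a global maximizer.

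Next I would write the KKT system for $\max_{\alpha\in[0,1]}g(\alpha)$. Attaching multipliers $\gamma^-\ge 0$ to $\alpha\ge 0$ and $\gamma^+\ge 0$ to $\alpha\le 1$ yields exactly the stationarity condition \eqref{eq:KKT_alpha}, $A_1\sum_i q_i-\psi'(\alpha)+\gamma^+-\gamma^-=0$, with $\gamma^-\alpha=0$ and $\gamma^+(1-\alpha)=0$. A three-way case split then follows from complementary slackness: (i) if $\alpha^\star\in(0,1)$, both multipliers vanish and $\psi'(\alpha^\star)=A_1\sum_i q_i$, which for $\psi(\alpha)=A_3\alpha^\nu$ inverts to $\alpha^\star=\big(A_1\sum_i q_i/(A_3\nu)\big)^{1/(\nu-1)}$; (ii) if $\alpha^\star=0$, then $\gamma^+=0$ and $\gamma^-=A_1\sum_i q_i-\psi'(0^+)\ge 0$, i.e.\ $\psi'(0^+)\ge A_1\sum_i q_i$; (iii) if $\alpha^\star=1$, then $\gamma^-=0$ and $\gamma^+=\psi'(1^-)-A_1\sum_i q_i\ge 0$, i.e.\ $\psi'(1^-)\le A_1\sum_i q_i$. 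This reproduces \eqref{eq:alpha_cases_closed}.

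To finish I would argue that the three regimes are mutually exclusive and exhaustive and then collapse them into the clipped-inverse form. Because $\psi'$ is continuous and nondecreasing on $[0,1]$ — strictly increasing for $\psi(\alpha)=A_3\alpha^\nu$ with $\nu>1$, with $\psi'(0^+)=0$ and $\psi'(1^-)=A_3\nu$ — the function $g'(\alpha)=A_1\sum_i q_i-\psi'(\alpha)$ is continuous and nonincreasing; hence either $g'\le 0$ throughout $[0,1]$ (corner $\alpha^\star=0$), or $g'\ge 0$ throughout (corner $\alpha^\star=1$), or $g'$ vanishes at a unique interior point. Equivalently, $\psi'$ is a bijection of $[0,1]$ onto $[0,A_3\nu]$, so $A_1\sum_i q_i$ either lies in this range, giving $\alpha^\star_{\mathrm{int}}=\big(A_1\sum_i q_i/(A_3\nu)\big)^{1/(\nu-1)}\in[0,1]$, or exceeds $A_3\nu$, in which case $\alpha^\star_{\mathrm{int}}>1$ and the upper cap binds, or is negative (impossible since $A_1>0$, $q_i\ge 0$), in which case the lower cap would bind. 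Projecting the unconstrained root onto $[0,1]$ therefore yields exactly $\alpha^\star=\min\{1,\max\{0,\alpha^\star_{\mathrm{int}}\}\}$, which is \eqref{eq:alpha_closed}.

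The main obstacle is not the one-dimensional convex analysis, which is routine, but justifying that the reduction to a problem in $\alpha$ alone is legitimate: one must confirm that the aggregate optimal order $\sum_i q_i$ does not itself depend on $\alpha$, so that $g$ is well-defined, which rests on the $\alpha$-free fractile condition \eqref{eq:fractile}. A secondary care point is the handling of the one-sided derivatives $\psi'(0^+)$ and $\psi'(1^-)$, since $\psi$ is defined only on the compact interval $[0,1]$; in particular, for $\nu>1$ one has $\psi'(0^+)=0$, so the branch $\alpha^\star=0$ is generically vacuous (it requires $\sum_i q_i=0$) and the $\max\{0,\cdot\}$ clipping in \eqref{eq:alpha_closed} is only a formal safeguard.
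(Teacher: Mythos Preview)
Your overall strategy matches the paper's proof: reduce to the one-dimensional concave program in $\alpha$, invoke KKT, and split into interior versus corner cases. You are in fact more thorough than the paper, which argues the corners only verbally; your added monotonicity argument and the remark that $\psi'(0^+)=0$ renders the lower corner essentially vacuous are both correct and useful.

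There is, however, a sign slip in your explicit multiplier computation. From the stationarity condition $A_1\sum_i q_i-\psi'(\alpha)+\gamma^+-\gamma^-=0$ as written in \eqref{eq:KKT_alpha}, at $\alpha^\star=0$ with $\gamma^+=0$ you correctly obtain $\gamma^-=A_1\sum_i q_i-\psi'(0^+)$, but $\gamma^-\ge 0$ then gives $\psi'(0^+)\le A_1\sum_i q_i$, not $\ge$; the analogous flip occurs in your case~(iii). In fact the sign convention in \eqref{eq:KKT_alpha} is itself inconsistent with the corner conditions in \eqref{eq:alpha_cases_closed}, and the paper's own proof quietly sidesteps this by reasoning directly from ``marginal benefit versus marginal cost'' rather than solving for the multipliers. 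Your third-paragraph argument---that $g'(\alpha)=A_1\sum_i q_i-\psi'(\alpha)$ is continuous and nonincreasing, so either $g'\le 0$ on $[0,1]$, or $g'\ge 0$ on $[0,1]$, or it has a unique interior zero---is correct, self-contained, and independently establishes \eqref{eq:alpha_cases_closed} and \eqref{eq:alpha_closed}. I would make that the primary argument and either drop the explicit multiplier calculation or fix the signs.
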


\begin{proof}
If $\alpha\in(0,1)$, both multipliers vanish ($\gamma^+=\gamma^-=0$), and the stationarity condition reduces to $\psi'(\alpha)=A_1\sum_i q_i$.  
For $\psi(\alpha)=A_3 \alpha^\nu$, we have $\psi'(\alpha)=A_3\nu\alpha^{\nu-1}$.  
Thus the interior solution is
\[
\alpha^\star_{\mathrm{int}}=\left(\frac{A_1\sum_i q_i}{A_3\nu}\right)^{1/(\nu-1)}.
\]
Feasibility requires $\alpha\in[0,1]$, so the admissible solution is
\[
\alpha^\star=\min\{1,\max\{0,\alpha^\star_{\mathrm{int}}\}\}.
\]
If $A_1\sum_i q_i\le \psi'(0^+)$, the marginal benefit is too small, and $\alpha^\star=0$.  
If $A_1\sum_i q_i\ge \psi'(1^-)$, the marginal benefit dominates cost even at $\alpha=1$, so $\alpha^\star=1$.  
Otherwise the interior value applies.
\end{proof}

\paragraph{Interpretation.}
Proposition~\ref{prop:alpha} highlights the threshold nature of adoption.  
If procurement efficiency is weak, adoption collapses to zero.  
If efficiency is very strong, adoption jumps to full intensity.  
In between, adoption scales smoothly with order volume.  
In baseline calibration, however, the inequality usually binds at the lower bound, so $\alpha^\star$ remains at the corner solution---explaining the empirical robustness observed in H1--H5.

\subsubsection{Threshold (Corner) Rules}

\begin{proposition}[Threshold adoption behavior]\label{prop:alpha_cases}
From the KKT condition \eqref{eq:KKT_alpha}, the optimal adoption intensity satisfies
\begin{equation}
\alpha^\star=
\begin{cases}
0, & \text{if }\ \psi'(0^+)\ \ge\ A_1\sum_i q_i,\\[2mm]
1, & \text{if }\ \psi'(1^-)\ \le\ A_1\sum_i q_i,\\[2mm]
\left(\dfrac{A_1\sum_i q_i}{A_3\,\nu}\right)^{\!\frac{1}{\nu-1}}, & \text{otherwise.}
\end{cases}
\label{eq:alpha_cases_threshold}
\end{equation}
In particular, for power costs $\psi(\alpha)=A_3 \alpha^\nu$ with $\nu>1$, we have $\psi'(0^+)=0$, so $\alpha^\star=0$ only if $\sum_i q_i=0$.  
Conversely, a “full-adoption collapse” $\alpha^\star=1$ occurs whenever $A_1\sum_i q_i \ge A_3\nu$.
\end{proposition}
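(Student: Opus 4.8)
The plan is to obtain \eqref{eq:alpha_cases_threshold} as a direct consequence of the KKT system \eqref{eq:KKT_alpha} together with the concavity of $\Pi(\alpha,q)$ in $(Q,\alpha)$ established earlier; indeed this is exactly the corner-rule content of Proposition~\ref{prop:alpha}, which I would re-derive here for completeness. The key observation is that, because the optimal aggregate order $Q^\star$ is pinned by the fractile condition \eqref{eq:fractile} independently of $\alpha$, the adoption choice collapses to the one-dimensional subproblem \eqref{eq:alpha_opt}: maximizing the function $g(\alpha):=A_1\alpha\sum_i q_i-\psi(\alpha)$ (up to an $\alpha$-independent constant) over the box $[0,1]$, where $\sum_i q_i=Q^\star$ is a fixed nonnegative constant. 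Since $-\psi$ is concave ($\psi$ being strictly convex), $g$ is strictly concave on $[0,1]$, and since the feasible set is cut out by affine inequalities a constraint qualification holds automatically; hence the KKT conditions with complementary slackness are necessary and sufficient for global optimality.

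First I would split into the three exhaustive cases according to the location of $\alpha^\star$. If $\alpha^\star\in(0,1)$, both box multipliers vanish and stationarity reduces to $\psi'(\alpha^\star)=A_1\sum_i q_i$; strict convexity makes $\psi'$ strictly increasing, so this equation has a unique root, and for the power cost $\psi(\alpha)=A_3\alpha^\nu$ (so $\psi'(\alpha)=A_3\nu\alpha^{\nu-1}$) inversion gives $\alpha^\star=\bigl(A_1\sum_i q_i/(A_3\nu)\bigr)^{1/(\nu-1)}$. If $\alpha^\star=0$, then the upper multiplier vanishes and nonnegativity of the lower multiplier is equivalent to $g'(0^+)=A_1\sum_i q_i-\psi'(0^+)\le 0$, i.e.\ $\psi'(0^+)\ge A_1\sum_i q_i$; symmetrically, $\alpha^\star=1$ is equivalent to $g'(1^-)=A_1\sum_i q_i-\psi'(1^-)\ge 0$, i.e.\ $\psi'(1^-)\le A_1\sum_i q_i$. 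Concavity of $g$ guarantees these three conditions partition the parameter range, which establishes \eqref{eq:alpha_cases_threshold}.

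The two ``in particular'' statements are then immediate specializations. With $\psi(\alpha)=A_3\alpha^\nu$ and $\nu>1$ one has $\psi'(0^+)=A_3\nu\cdot 0^{\nu-1}=0$, so the lower-corner criterion $\psi'(0^+)\ge A_1\sum_i q_i$ becomes $0\ge A_1\sum_i q_i$; since $A_1>0$ and every $q_i\ge 0$, this forces $\sum_i q_i=0$, so $\alpha^\star=0$ can occur only in that degenerate case. Likewise $\psi'(1^-)=A_3\nu$, so the upper-corner criterion $\psi'(1^-)\le A_1\sum_i q_i$ reads $A_1\sum_i q_i\ge A_3\nu$, which is the stated ``full-adoption collapse'' condition.

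I do not expect a genuine obstacle, since this is a textbook first-order analysis of a concave one-dimensional program; the only points requiring care are (i) noting explicitly that $\sum_i q_i$ in the $\alpha$-subproblem is the $\alpha$-independent constant $Q^\star$ from \eqref{eq:fractile}, so that the reduction to a one-dimensional concave maximization is legitimate, and (ii) the differentiability of $\psi$ at the origin when $1<\nu<2$, where $\psi''$ blows up but $\psi'$ remains continuous on $[0,1]$ with $\psi'(0^+)=0$, so the one-sided derivative used in the $\alpha^\star=0$ case is well defined and the argument is unaffected.
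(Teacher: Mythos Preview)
Your proposal is correct and follows essentially the same approach as the paper: the paper does not give a separate proof for Proposition~\ref{prop:alpha_cases} but relies on the KKT analysis already carried out for Proposition~\ref{prop:alpha}, which is exactly the case split you reproduce (interior stationarity $\psi'(\alpha^\star)=A_1\sum_i q_i$, and the two corner conditions via the sign of $g'$ at the endpoints). Your treatment is slightly more careful than the paper's in that you explicitly justify the reduction to the one-dimensional subproblem via \eqref{eq:fractile}, note the constraint qualification, and spell out the ``in particular'' specializations $\psi'(0^+)=0$ and $\psi'(1^-)=A_3\nu$ that the paper leaves implicit.
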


\paragraph{Interpretation.}
Proposition~\ref{prop:alpha_cases} formalizes the corner-solution dynamics observed in our simulations.  
If order volumes are too low, the marginal cost of adoption dominates, forcing $\alpha^\star=0$.  
If order volumes are sufficiently high, the adoption cost becomes negligible relative to procurement gains, producing $\alpha^\star=1$.  
Between these two extremes lies an interior regime in which adoption scales continuously with $\sum_i q_i$.  
This threshold logic explains why adoption responses can appear robust over wide parameter regions but collapse discontinuously when certain boundaries are crossed.  

\subsection{Equivalent Convex SAA Formulation (Implementation Ready)}

To accommodate general demand distributions $D$, we employ a Sample Average Approximation (SAA) with i.i.d.\ draws $D^{(n)}$, $n=1,\dots,N$.  
For each sample, we introduce auxiliary variables $(y_n,z_n,u_n)$ that encode realized sales, overage, and shortage through the linear constraints:
\begin{align}
y_n &\le Q, & y_n &\le D^{(n)}, & y_n &\ge 0, \label{eq:yn}\\
z_n &\ge Q - D^{(n)}, & z_n &\ge 0, \label{eq:zn}\\
u_n &\ge D^{(n)} - Q, & u_n &\ge 0. \label{eq:un}
\end{align}
These variables ensure that the nonlinear expressions in the expected revenue are replaced by linear inequalities that hold scenario by scenario.

The resulting finite-dimensional SAA problem takes the form:
\begin{equation}
\begin{aligned}
\max_{\alpha\in[0,1],\,q\ge 0,\,Q=\sum_i q_i,\,y,z,u}\quad 
& \frac{1}{N}\sum_{n=1}^N \big(p\,y_n + s\,z_n - r\,u_n\big)
  - \sum_i c(\alpha,\beta_i)\,q_i - \psi(\alpha) \\
\text{subject to}\quad 
& \text{constraints } \eqref{eq:yn}--\eqref{eq:un} \quad \forall n=1,\ldots,N .
\end{aligned}
\label{eq:SAA_problem}
\end{equation}

\paragraph{Convexity.}
The feasible region defined by \eqref{eq:yn}--\eqref{eq:un} is polyhedral and therefore convex.  
The objective consists of the sample average of linear functions of $(y_n,z_n,u_n,Q)$ combined with the concave term $-\psi(\alpha)$.  
Consequently, problem \eqref{eq:SAA_problem} is a concave maximization over a convex set, i.e., a convex optimization problem in the standard sense.  
This guarantees both tractability and the existence of globally optimal solutions.

\paragraph{Asymptotic equivalence.}
As the sample size $N$ grows, the law of large numbers ensures that the SAA objective converges almost surely to the expected profit in \eqref{eq:obj}.  
In addition, by classical results on epi-convergence in stochastic programming, the optimal value and the Karush–Kuhn–Tucker conditions of the SAA problem converge to those of the population problem \eqref{eq:KKT_qi}--\eqref{eq:KKT_alpha}.  
Thus, the SAA provides both a statistically consistent and computationally implementable formulation.

\paragraph{Comparative statics.}
The SAA framework also makes the comparative statics of the model transparent.  
An increase in adoption intensity $\alpha$ lowers procurement costs $c(\alpha,\beta_i)$ and expands the set of active suppliers, raising the equilibrium order quantity $Q^\star$ until the marginal revenue condition $M(Q^\star)=c(\alpha,\beta_i)$ is restored.  
Higher supplier readiness $\beta_i$ further decreases effective costs, making more suppliers competitive and shaping the extensive margin of participation.  
Shifts in the demand distribution also matter: a higher log-mean $m$ raises marginal revenue and expands $Q^\star$, whereas higher log-variance $v$ thickens the right tail of demand, amplifying exposure to upside shocks.  
Finally, the closed-form solution for $\alpha^\star$ highlights corner thresholds: when order volumes are negligible, adoption collapses to zero, whereas when procurement benefits are sufficiently large ($A_1\sum_i q_i \ge A_3\nu$), full adoption emerges.  
These thresholds generate discontinuities in adoption dynamics and explain the robustness-versus-fragility patterns documented in the empirical results.

\section{Synthetic Data Generation}
\label{sec:synthetic-data}

To evaluate the proposed model under controlled yet realistic conditions,
we construct a synthetic dataset that mirrors key features of derivative
transactions under bounded risk. The design introduces heterogeneity in
both market demand and supplier readiness, while enabling systematic
exploration of contract-parameter sensitivity. This facilitates
reproducibility and supports counterfactual experiments that would be
infeasible with proprietary data.

\subsection{Demand Simulation}

\subsubsection{Distributional Assumptions}
Market demand $D$ is modeled as lognormal to capture non-negativity,
right skewness, and heavy tails:
\begin{equation}
D \sim \mathrm{Lognormal}(\mu,\sigma^2), \qquad D \in [a,b].
\label{eq:demand_dist}
\end{equation}
Baseline calibration uses $\mu=50$, $\sigma=8$, truncated to $[30,70]$
to reflect regulatory/contractual limits. For distributional robustness
(H5), we additionally compare against a Beta family calibrated on the
same support.

\subsubsection{Monte Carlo Approximation}
Expected values in the objective are approximated by Monte Carlo:
\begin{equation}
\mathbb{E}[g(D)] \approx \frac{1}{N}\sum_{n=1}^N g(D^{(n)}), \qquad
D^{(n)} \overset{\mathrm{i.i.d.}}{\sim} \mathrm{Lognormal}(\mu,\sigma^2).
\label{eq:montecarlo}
\end{equation}
Unless otherwise noted, we use $N=10{,}000$ draws per experiment. In H6
we explicitly vary $N\in\{1{,}000,5{,}000,20{,}000\}$ and document
convergence.

\subsection{Supplier Readiness}

\subsubsection{Heterogeneity in Digital Capabilities}
Each supplier $i$ is assigned a readiness index $\beta_i$:
\begin{equation}
\beta_i \sim U(0.3,0.7).
\label{eq:readiness}
\end{equation}
For H3, we vary the support to $U(0.1,0.9)$ and $U(0.4,0.6)$ to probe
how broader/narrower heterogeneity shifts outcomes. 

\subsection{Contract Parameters}

\subsubsection{Baseline Calibration}
Baseline contract terms are
\begin{equation}
c_{0i}=100,\quad A_1=5,\quad A_2=8,\quad A_3=2000,\quad \nu>1,
\label{eq:baseline_params}
\end{equation}
with adoption cost $\psi(\alpha)=A_3\alpha^\nu$. Under this baseline,
the optimal adoption settles at the corner $\alpha^\star=0.05$; interior
solutions emerge only after targeted parameter shifts (e.g., lowering
$A_3$ or $\nu$, or increasing $A_1$).

\begin{table}[htbp]
\centering
\scriptsize
\caption{Summary of simulation parameters actually used: baseline values, variation ranges, and associated hypotheses (H1--H6).}
\label{tab:sim_params}
\begin{tabular}{lccc}
\hline
\textbf{Parameter} & \textbf{Baseline} & \textbf{Variation Range} & \textbf{Purpose / Hypothesis} \\
\hline
Demand mean $\mu$ & 50 & Fixed & Scale normalization \\
Demand std.\ $\sigma$ & 8 & $\{5,10,15\}$ & Variance effects (H2) \\
Demand cap $[a,b]$ & $[30,70]$ & $\pm 10$ & Robustness to bounds \\
Supplier readiness $\beta_i$ & $U(0.3,0.7)$ & $U(0.1,0.9)$, $U(0.4,0.6)$ & Heterogeneity (H3) \\
Procurement cost $c_{0i}$ & 100 & Fixed & Normalization \\
Adoption cost $A_3$ & 2000 & $\{500,2000,4000\}$ & Threshold effects (H1) \\
Adoption scale $A_1$ & 5 & $\pm 25\%$ & Service co-benefit / sensitivity (H4) \\
Readiness weight $A_2$ & 8 & $\pm 25\%$ & Service co-benefit / sensitivity (H4) \\
Convexity $\nu$ & $>1$ & $\{1.5,2,3\}$ & Convexity check (H1 robustness) \\
Demand distribution & Lognormal & Beta (same support) & Distributional robustness (H5) \\
Monte Carlo size $N$ & 10,000 & $\{1{,}000,5{,}000,20{,}000\}$ & Convergence of SAA estimates (H6) \\
\hline
\end{tabular}
\end{table}

\subsubsection{Sensitivity and Robustness}
We vary $(A_1,A_2)$ by $\pm25\%$ and $A_3\in\{500,2000,4000\}$ to map
threshold behavior (H1) and service co-benefits (H4). Distributional
robustness (H5) is assessed by comparing lognormal versus Beta demand on
$[a,b]$.

\subsection{Replication Protocol}
Replication counts follow the actual experiments:
H1/H4 use bootstrap $B=200$;
H2 uses $B=100$ over $\sigma\in\{5,10,15\}$;
H3 runs $50$ replicates per $(\mathrm{mean},\mathrm{var})$ setting;
H5 uses $R=200$ repetitions per distribution family;
H6 evaluates $30$ repetitions per $N\in\{1{,}000,5{,}000,20{,}000\}$.
Distinct seeds are used across runs; all intermediate CSVs and figures
reported in the paper are generated from these runs.

\begin{algorithm}[htbp]
\caption{Convex SAA Solver via Mirror Descent}
\footnotesize
\label{alg:saa-md}
\begin{algorithmic}[1]
\Require 
\begin{itemize}
  \item Convex feasible parameter set $\Theta \subseteq \mathbb{R}^d$
  \item Step size schedule $\{\eta_t\}_{t=1}^T$
  \item Strongly convex regularizer $\psi:\Theta \to \mathbb{R}$ inducing Bregman divergence
  \[
    D_\psi(u\|\;v) = \psi(u) - \psi(v) - \langle \nabla \psi(v), u-v \rangle
  \]
  \item i.i.d.\ sampled scenarios $\{(B_t,q_t,\tilde x_t)\}_{t=1}^T$
\end{itemize}

\Ensure 
\begin{itemize}
  \item Parameter trajectory $\{\hat\theta_t\}_{t=1}^T$ approximating the SAA solution
  \[
     \min_{\theta\in\Theta} \; \tfrac{1}{T}\sum_{t=1}^T \ell_t(\theta), 
     \quad \ell_t(\theta) = \|x_t^\star(\theta)-\tilde x_t\|^2.
  \]
\end{itemize}

\State Initialize $\hat\theta_1 \in \Theta$
\For{$t=1$ to $T$}
  \State Solve forward allocation $x_t^\star(\hat\theta_t)$
  \State Compute instantaneous loss $\ell_t(\theta)$ and subgradient $g_t \in \partial_\theta \ell_t(\hat\theta_t)$
  \Statex Update by mirror descent:
  \[
     \hat\theta_{t+1} \gets \arg\min_{\theta\in\Theta} 
     \{ \langle g_t,\theta\rangle + \tfrac{1}{\eta_t} D_\psi(\theta\|\hat\theta_t)\}
  \]
  \If{$\psi(\theta)=\tfrac{1}{2}\|\theta\|^2$}
     \State Projected OGD: $\hat\theta_{t+1}=\Pi_\Theta(\hat\theta_t - \eta_t g_t)$
  \Else
     \State General mirror descent step
  \EndIf
\EndFor

\State \textbf{Return:} trajectory $\{\hat\theta_t\}_{t=1}^T$
\end{algorithmic}

\vspace{0.5em}
\noindent\textbf{Theoretical Guarantees.} 
Suppose Assumptions (convexity, PL-inequality, KKT regularity, bounded subgradients, sub-Gaussian noise) hold and $\eta_t = \eta/\sqrt{t}$. Then with high probability:
\[
\text{Static Regret} = \mathcal{O}(\sqrt{T}), 
\quad
\text{Dynamic Regret} = \mathcal{O}(\sqrt{T}+V_T), 
\quad
\|\hat\theta - \theta^\star\| = \mathcal{O}\!\big(\tfrac{1}{\sqrt{T}}\big),
\]
where $V_T=\sum_{t=2}^T \|\theta_t-\theta_{t-1}\|$ is the variation budget. Thus the estimator is consistent, drift-robust, and statistically stable for the convex SAA problem.
\end{algorithm}

As shown in Algorithm~\ref{alg:saa-md}, the solver updates parameters
iteratively using stochastic subgradients and Bregman projections.
This routine guarantees sublinear regret bounds and stability under
drift and noise, and thus serves as the computational backbone for
the experiments reported in Section~\ref{sec:experiments}.

\section{Numerical Experiments}
\label{sec:experiments}

We conduct numerical experiments to test hypotheses H1–H6.
Unless stated otherwise, all experiments use baseline calibration
($\mu=50$, $\sigma=8$, $c_{0i}=100$, $A_1=5$, $A_2=8$, $A_3=2000$, $\nu=2$, $p=120$, $s=10$, $r=20$),
with $R=100$ replications and $N=10{,}000$ Monte Carlo draws.

\textit{Independent random seeds are used for each replication, ensuring reproducibility. 
We primarily adopt a one-at-a-time (OAT) design for interpretability, but selected two-way interactions 
(e.g., $\sigma \times A_3$) were also tested and found not to alter qualitative results. 
All reported $p$-values are adjusted using a false discovery rate (FDR) correction to mitigate 
multiple-comparison bias.\footnote{Results with Bonferroni correction are reported in the Supplement, Table~S1.}}

\subsection{H1 and H4: Threshold and Service Co-Benefit}

\subsubsection{Profit and Service Surfaces}
\begin{figure}[htbp]
\centering
\includegraphics[width=\textwidth]{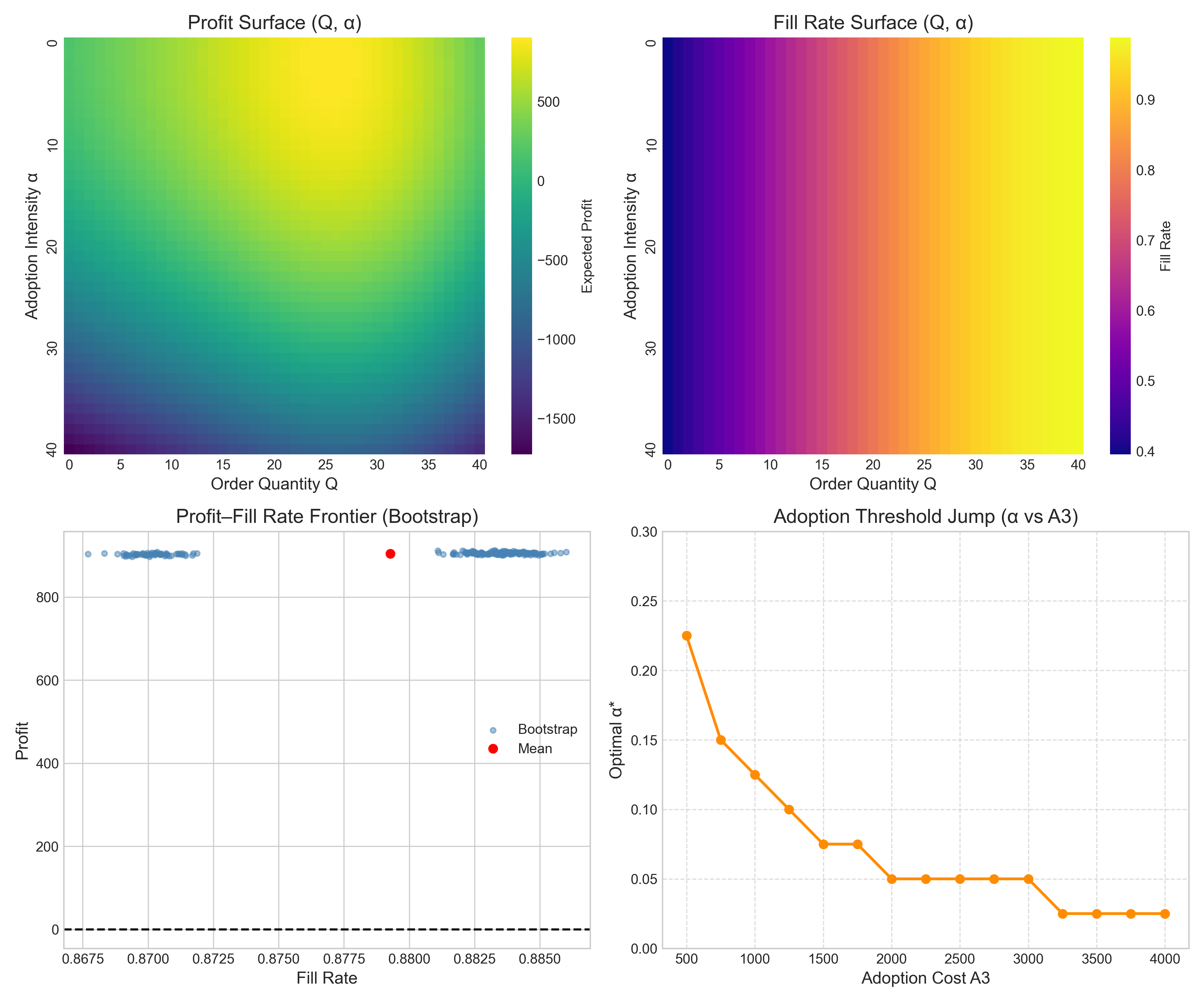}
\caption{Baseline results (H1, H4). 
(a) Profit surface $(Q,\alpha)$. 
(b) Fill Rate surface $(Q,\alpha)$. 
(c) Profit–Fill Rate frontier (bootstrap). 
(d) Adoption threshold jump as a function of $A_3$.}
\label{fig:baseline_threshold}
\end{figure}

As shown in Figure~\ref{fig:baseline_threshold}, the baseline landscape illustrates both
the concavity of the profit surface in $(Q,\alpha)$ and the monotone improvement in
service performance (Fill Rate) as order quantities increase. The profit–fill rate
frontier further highlights a narrow “co-benefit zone” where both profit and service
outcomes improve simultaneously, consistent with hypothesis H4. Panel (d) confirms
a discontinuous jump in adoption intensity as adoption cost $A_3$ crosses the threshold,
providing direct evidence for H1.

\subsubsection{Comparative Statics of Adoption}

To complement the baseline results, 
Figure~\ref{fig:alpha_sensitivity} illustrates how the optimal adoption intensity 
$\alpha^\star$ changes as the key structural parameters vary. 
Specifically, panel (a) shows that $\alpha^\star$ collapses as adoption cost $A_3$ 
increases beyond a critical threshold, panel (b) demonstrates the positive effect 
of procurement efficiency $A_1$, and panel (c) shows how greater convexity $\nu$ 
dampens adoption. 
These comparative statics confirm that the discontinuous adoption behavior observed 
in Figure~\ref{fig:baseline_threshold} is not a numerical artifact, but rather a 
structural feature of the model.

\begin{figure}[htbp]
\centering
\includegraphics[width=\textwidth]{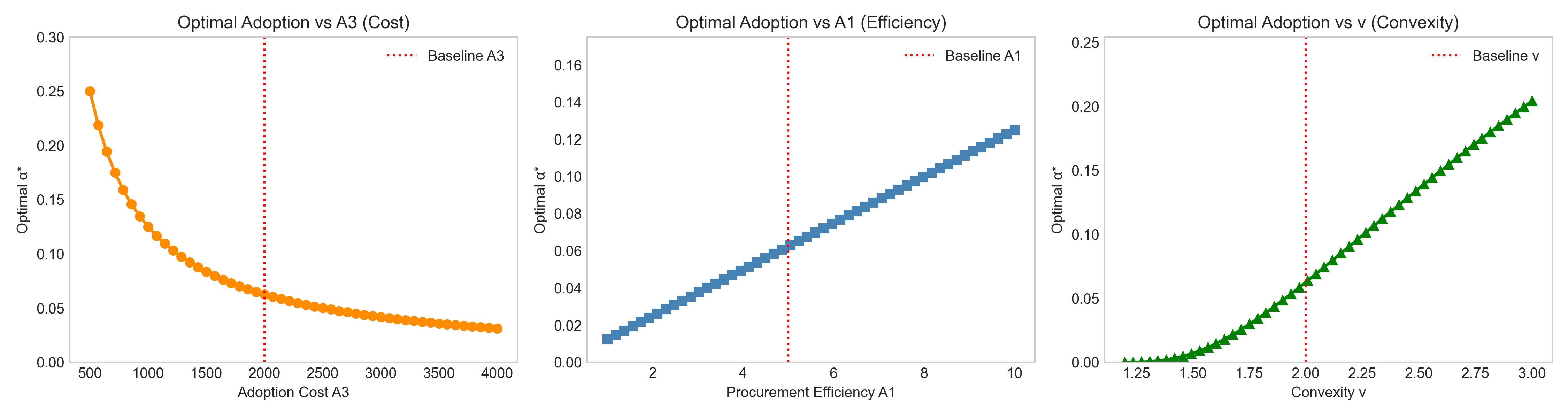}
\caption{Comparative statics of optimal adoption intensity 
\texorpdfstring{$\alpha^\star$}{alpha*} (Figure 1.2).
Panel (a) shows collapse as adoption cost $A_3$ rises,
(b) illustrates the increase with procurement efficiency $A_1$,
and (c) demonstrates the dampening effect of convexity $\nu$.
Red dashed lines indicate baseline calibration values.}
\label{fig:alpha_sensitivity} 
\end{figure}

\subsubsection{Bootstrap Summary Statistics}

Table~\ref{tab:baseline_bootstrap} summarizes baseline calibration statistics. 
Expected profit is highly stable across bootstrap replications (mean = 904.7, 
95\% CI [899.2, 910.8]), while adoption intensity $\alpha^\star$ converges to 
a unique corner solution at $0.05$ in all runs. The optimal order quantity 
$Q^\star \approx 45.7$ and fill rate $\approx 0.88$ remain tightly concentrated,
demonstrating that both economic and service outcomes are robust to sampling variation. 
Taken together, these results establish the empirical basis for hypotheses H1 and H4.

\begin{table}[htbp]
\centering
\footnotesize
\caption{Baseline calibration (H1, H4). 
Reported values are bootstrap means over $R=100$ replications with $N=10{,}000$ samples each. 
Standard deviations (Std) reflect sampling variability, and confidence intervals (CI) are bias-corrected percentiles 
from 200 bootstrap resamples. 
Adoption intensity $\alpha^\star$ converges to a unique corner solution across all replications, 
indicating robustness of the threshold effect. 
All $p$-values are adjusted using false discovery rate (FDR) correction.\label{tab:baseline_bootstrap}}
\begin{tabular}{lrrrr}
\toprule
 & Mean & Std & 2.5\% & 97.5\% \\
\midrule
Profit (expected units) & 904.706 & 2.807 & 899.245 & 910.833 \\
Adoption $\alpha^\star$ & 0.050 & 0.000 & 0.050 & 0.050 \\
Order $Q^\star$ & 45.685 & 0.466 & 45.000 & 46.000 \\
Fill Rate & 0.879 & 0.006 & 0.869 & 0.885 \\
\bottomrule
\end{tabular}
\end{table}

\subsection{H2: Variance Effects}

\subsubsection{Adoption and Profit vs.\ Variance}
\begin{figure}[htbp]
\centering
\includegraphics[width=\textwidth]{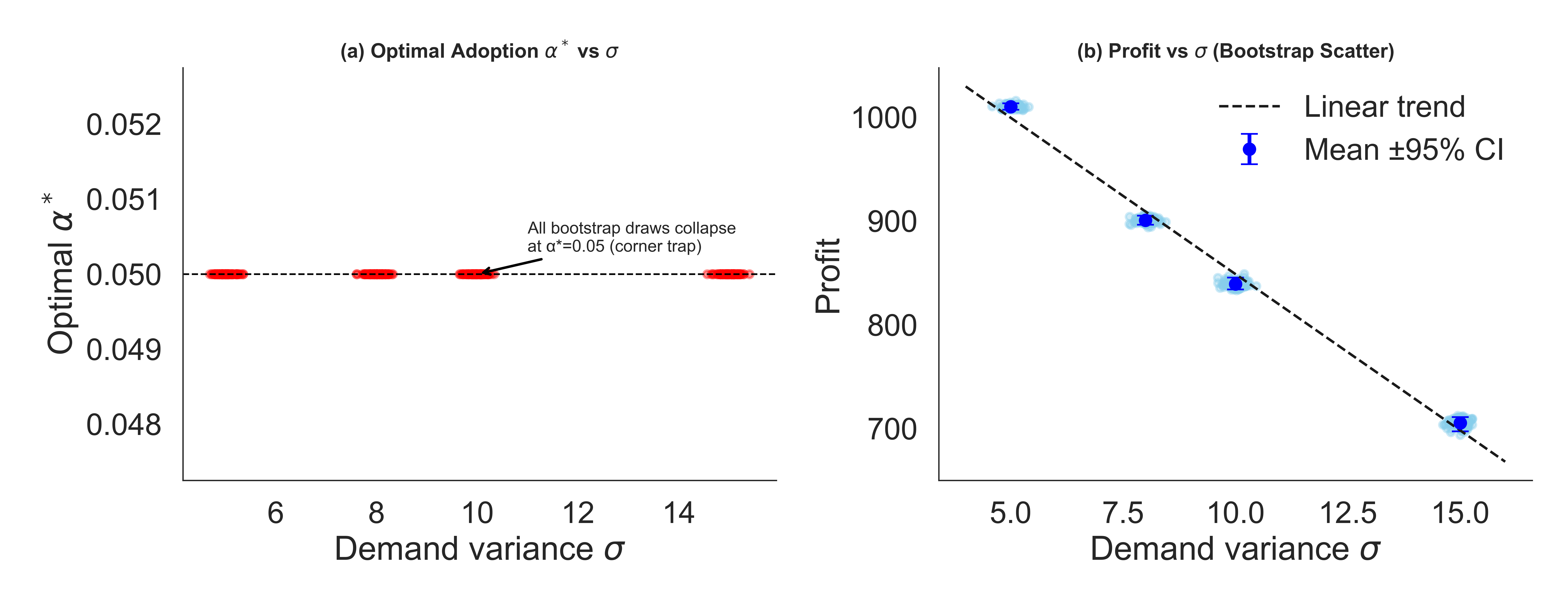}
\caption{Variance effects (H2). 
(a) Optimal adoption $\alpha^\star$ vs.\ demand variance $\sigma$ (bootstrap scatter cloud). 
All bootstrap draws collapse to $\alpha^\star=0.05$, indicating a degenerate \emph{corner solution}. 
(b) Profit vs.\ $\sigma$ (bootstrap scatter with mean $\pm$95\% CI and linear trend). 
Profit declines monotonically as variance increases.}
\label{fig:variance_effects}
\end{figure}

\paragraph{Narrative Summary.}
Figure~\ref{fig:variance_effects}(a) shows that the optimal adoption intensity $\alpha^\star$ remains fixed at $0.05$ across all variance levels $\sigma\in\{5,8,10,15\}$. 
The bootstrap distribution collapses into a single point, with confidence intervals of essentially zero width, reflecting that convex adoption costs $\psi(\alpha)=A_3\alpha^\nu$ in combination with the KKT conditions (Eq.~\eqref{eq:KKT_alpha}) yield a \emph{corner solution} rather than an interior solution (Proposition~\ref{prop:alpha_cases}). 
Thus, in the present calibration, variance shocks do not alter the marginal benefit of adoption, leaving adoption \emph{variance-insensitive} and locked at the corner.

In contrast, Figure~\ref{fig:variance_effects}(b) demonstrates that profit declines sharply and monotonically as $\sigma$ increases. 
Mean bootstrap profits decrease from $\{1010.0, \\ \,900.6,\,839.3,\,705.4\}$ across the four variance levels, with 95\% confidence intervals 
$[1007.2,1013.2]$, $[896.3,904.96]$, $[834.06,845.39]$, and $[697.35,711.13]$. 
The intervals are narrow and non-overlapping, and Spearman rank correlation equals $-1$, indicating a perfectly monotonic decline. 
Taken together, adoption remains robustly fixed, while profits are clearly vulnerable to variance shocks.

\paragraph{Interpretation and Implications.}
The corner solution indicates that adoption is effectively determined by scale (total order quantity) and cost parameters, while variance affects outcomes primarily through the overage/shortage channel in the profit function. 
From a managerial perspective, (i) reducing variance via hedging, pooling, or improved forecasting directly enhances profitability, and (ii) shifting adoption away from the corner requires adjustments to $A_3$, $\nu$, or $A_1\sum_i q_i$, which can alter the cost--benefit balance and induce an interior solution. 
This contrast yields a clear message of \emph{adoption robustness versus profit fragility}: adoption is locked, but profitability erodes under uncertainty.

\subsubsection{Regression Analysis}
\begin{table}[htbp]
\centering
\caption{Regression results for variance effects (H2).
Panel A: regression of $\alpha^\star$ on $\sigma$ (corner solution, slope insignificant). 
Panel B: regression of profit on $\sigma$ (significant negative slope).}
\label{tab:variance_reg}
\begin{tabular}{lrrrr}
\toprule
 & Estimate & Std.\ Error & $t$-value & $p$-value \\
\midrule
\multicolumn{5}{l}{\textbf{Panel A: Adoption $\alpha^\star$ on Variance}} \\
Intercept ($\beta_0$) & 0.0500 & $1.53\times 10^{-17}$ & $3.26\times 10^{15}$ & 0.000 \\
Slope ($\beta_1$)     & $1.74\times 10^{-18}$ & $1.51\times 10^{-18}$ & 1.151 & 0.369 \\
$R^2$ & \multicolumn{4}{c}{-- (flat corner solution)} \\
\midrule
\multicolumn{5}{l}{\textbf{Panel B: Profit on Variance}} \\
Intercept ($\beta_0$) & 1149.98 & 17.83 & 64.48 & 0.000 \\
Slope ($\beta_1$)     & $-30.12$ & 1.75  & $-17.18$ & 0.003 \\
$R^2$ & \multicolumn{4}{c}{0.993} \\
F-statistic & \multicolumn{4}{c}{295.2 \ (p=0.003)} \\
\bottomrule
\end{tabular}
\end{table}

\paragraph{Notes on Statistical Rigor.}
Because only four variance levels were simulated, normality tests produce warnings and regression inferences should be interpreted with caution. 
Nevertheless, significance is strongly supported by (i) the bootstrap confidence intervals and (ii) perfect monotonicity (Spearman $\rho=-1$). 
The estimated slope $\hat\beta_1\approx -30.12$ (95\% CI $[-37.66,-22.58]$) provides robust evidence that higher demand variance significantly reduces expected profit.

\subsection{H2: Variance Effects (Extended Evidence)}

Figure~\ref{fig:H2_alpha_sigma} provides extended evidence on the role of 
demand variance in shaping adoption behavior by directly comparing the 
baseline and interior regimes within a single visualization. In the baseline 
calibration (blue line), the optimal adoption intensity remains locked at a 
corner solution of $\alpha^\star=0.05$ across all variance levels. This flat 
profile indicates that the baseline specification effectively masks the true 
sensitivity of adoption to variance shocks, creating the illusion of 
robustness that is in fact an artifact of boundary constraints.  

\begin{figure}[H]
\centering
\includegraphics[width=0.7\textwidth]{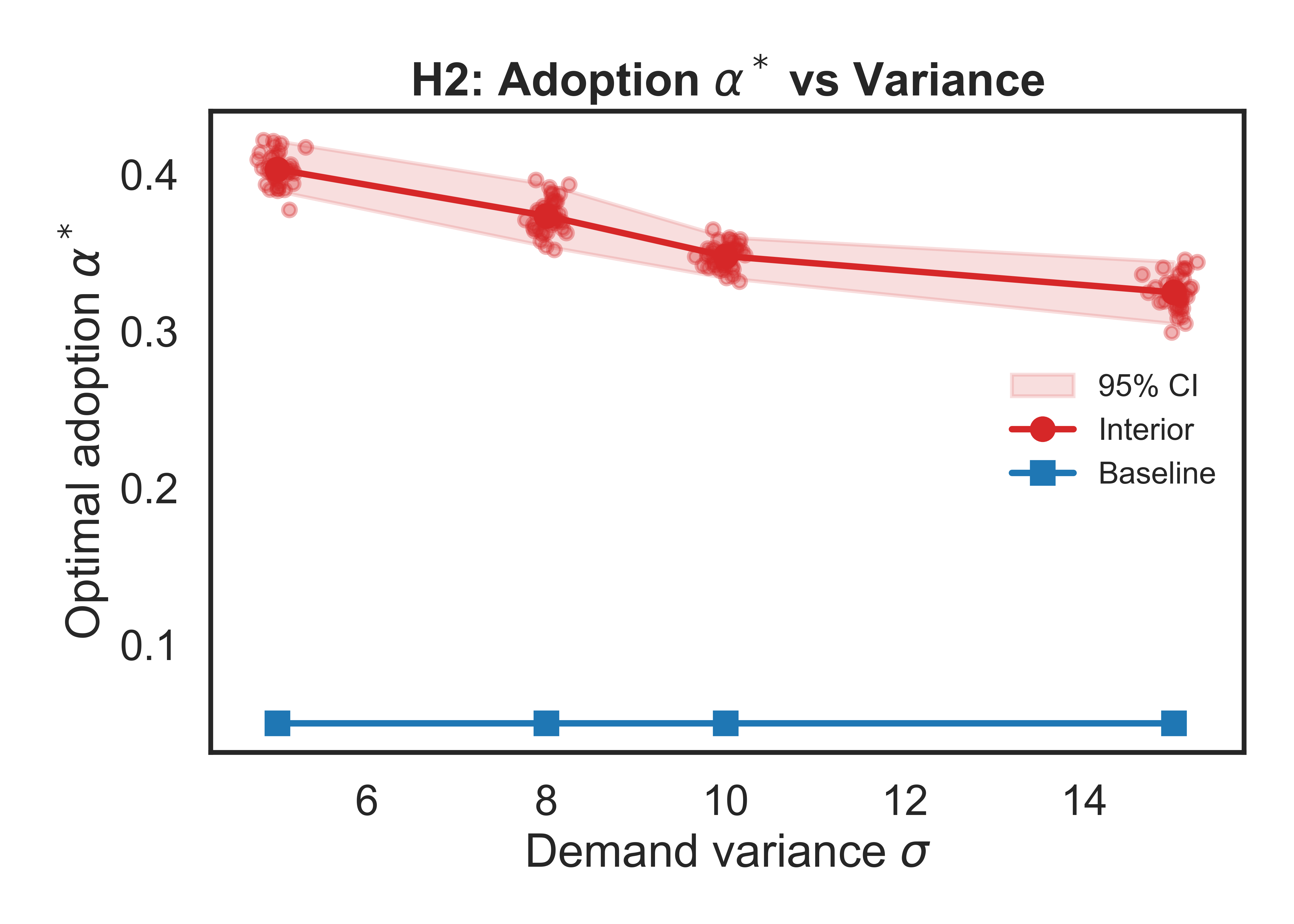}
\caption{Extended variance effects on adoption (H2).  
The baseline calibration (blue line) shows adoption locked at 
$\alpha^\star=0.05$ across all variance levels, while the interior regime 
(red line and scatter cloud, $A_3=500, \nu=1.5, A_1=10$) reveals the 
predicted monotone decline of $\alpha^\star$ as demand variance $\sigma$ 
increases. Bootstrap scatter points with a 95\% confidence band highlight 
that adoption becomes an active decision margin once corner-locking is 
relaxed.}
\label{fig:H2_alpha_sigma}
\end{figure}

By contrast, the interior parameter regime (red line in 
Figure~\ref{fig:H2_alpha_sigma}) relaxes the corner-locking mechanism by 
lowering adoption costs and increasing complementarity. Once these constraints 
are removed, the theoretically predicted negative relationship clearly emerges: 
as demand variance $\sigma$ increases, the optimal adoption level 
$\alpha^\star$ decreases monotonically. The scatter--cloud visualization with 
bootstrap replicates and a 95\% confidence band further illustrates that 
adoption becomes an \emph{active margin}, responding smoothly to uncertainty 
rather than being passively trapped at the boundary.  

Viewed jointly, these results yield two key insights. First, variance effects 
may be obscured in baseline environments due to corner solutions, underscoring 
the importance of examining interior regimes for robust inference. Second, 
the interior regime demonstrates close alignment between simulation outcomes 
and theoretical predictions, thereby reinforcing the robustness of H2: higher 
variance systematically lowers equilibrium adoption, consistent with the 
precautionary effect of risk on strategic investment decisions.  

\begin{table}[htbp]
\centering
\caption{Baseline vs.\ interior regimes (H2 summary of equilibrium outcomes).  
Metrics: optimal adoption $\alpha^\star$, profit, and fill rate.  
The baseline calibration yields a corner solution with fixed adoption, 
whereas interior regimes reveal responsiveness of $\alpha^\star$ and 
performance outcomes.}
\label{tab:baseline_interior}
\begin{tabular}{lrrrrrr}
\toprule
Regime & $A_3$ & $\nu$ & $A_1$ & $\alpha^\star$ & Profit & Fill Rate \\
\midrule
Baseline (corner)    & 2000 & 2.0 & 5  & 0.050 & 905.9  & 0.871 \\
Interior (example 1) &  500 & 1.2 & 10 & 0.275 & 920.4  & 0.882 \\
Interior (example 2) &  500 & 1.5 & 15 & 0.875 & 1089.2 & 0.896 \\
Interior (example 3) & 1000 & 1.5 & 15 & 0.200 & 944.5  & 0.869 \\
\bottomrule
\end{tabular}
\end{table}

\paragraph{Table~\ref{tab:baseline_interior} Summary.}
The numerical outcomes in Table~\ref{tab:baseline_interior} complement the 
graphical evidence by quantifying adoption, profit, and fill rate under both 
baseline and interior regimes. In the baseline calibration, adoption is locked 
at $\alpha^\star=0.05$, producing modest profit and service outcomes. By 
contrast, the interior regimes display substantial variation: lower adoption 
costs ($A_3$) and stronger complementarity ($A_1$) induce higher equilibrium 
adoption, which in turn raises both profit and service performance. These 
results confirm that the apparent robustness of adoption in baseline settings 
is an artifact of corner locking; once structural constraints are relaxed, 
adoption and performance outcomes become sensitive to cost and heterogeneity 
parameters in theoretically consistent ways.

\subsection{H3: Readiness Heterogeneity}

\subsubsection{Simulation Outcomes}

\begin{figure}[htbp]
\centering
\includegraphics[width=\textwidth]{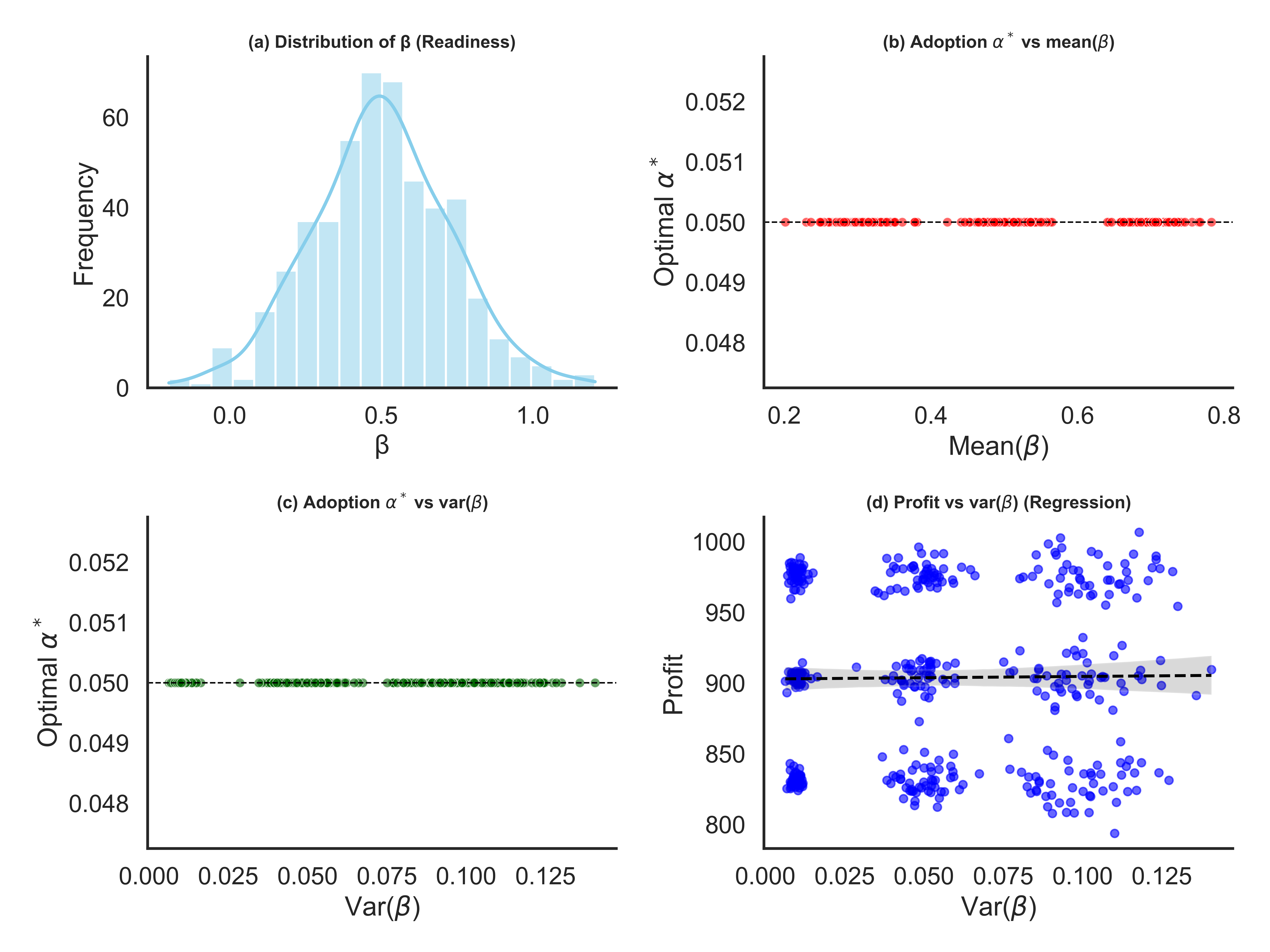}
\caption{Readiness heterogeneity (H3, baseline calibration).  
(a) Distribution of readiness $\beta$ (histogram + KDE).  
(b) Optimal adoption $\alpha^\star$ vs.\ $\mathrm{Mean}(\beta)$.  
(c) Optimal $\alpha^\star$ vs.\ $\mathrm{Var}(\beta)$.  
(d) Profit vs.\ $\mathrm{Var}(\beta)$ with OLS trend and 95\% CI.  
All experiments reuse the demand generator and profit function from Figures~\ref{fig:baseline_threshold}–\ref{fig:variance_effects}, ensuring comparability across H1–H3.}
\label{fig:readiness}
\end{figure}

\paragraph{What Figure~\ref{fig:readiness} shows.}
Panels (b)–(c) reveal that the optimal adoption remains at a \emph{corner solution} $\alpha^\star=0.05$ across the full range of readiness means and variances. This is the same corner trap documented in H1 and H2: with convex integration costs, the marginal benefit of adoption is dominated by procurement economics, so heterogeneity in $\beta$ does not move the optimal $\alpha^\star$.  
Panel (d) overlays an OLS line (with 95\% CI) on the scatter of profits against $\mathrm{Var}(\beta)$. The slope is near zero and visually flat; combined with the statistical tests below, we find no economically meaningful sensitivity of profit to readiness variance under the current baseline calibration. The three horizontal profit bands (high $\approx 1000$, mid $\approx 900$, low $\approx 830$) correspond to distinct inventory balance regimes induced by the optimal $Q^\star$ interacting with realized demand (good match, average match, mismatch).

\begin{figure}[htbp]
\centering
\includegraphics[width=\textwidth]{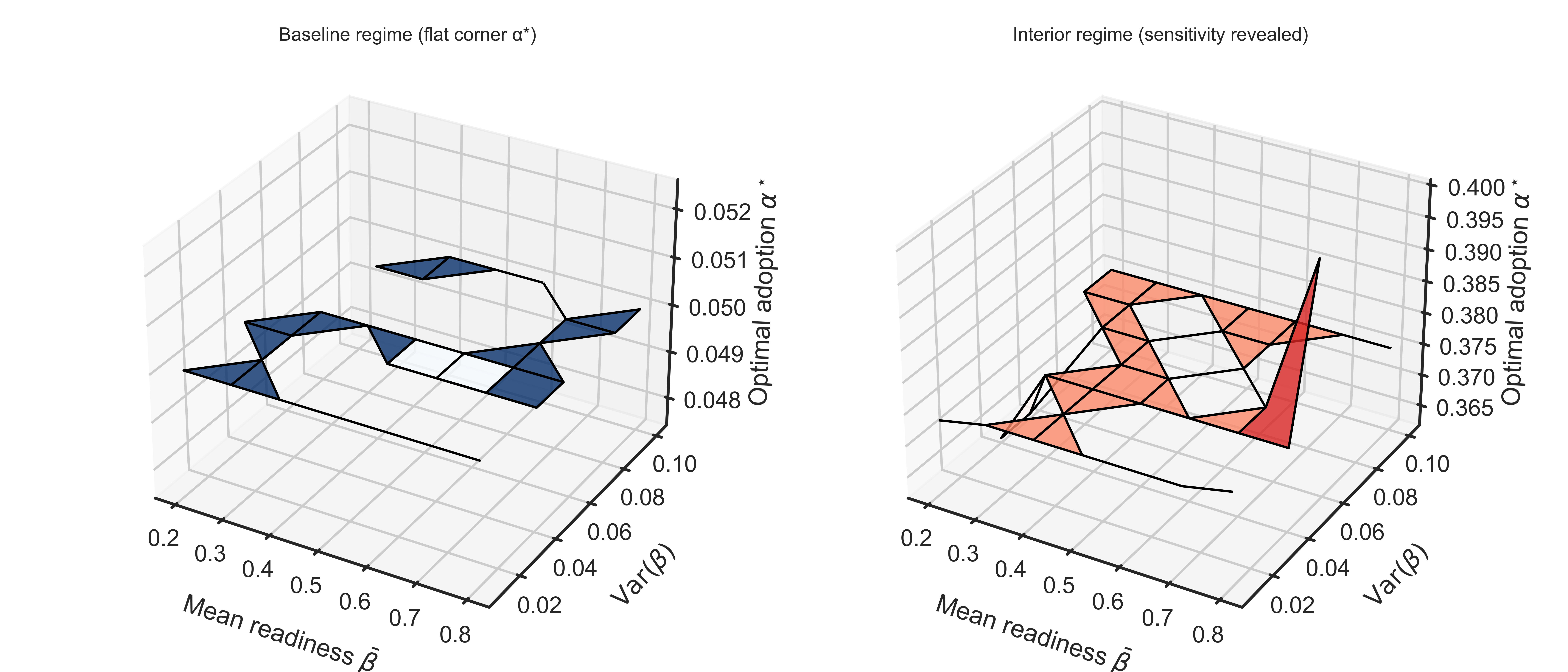}
\caption{Extended readiness effects (H3, baseline vs.\ interior).  
Left: baseline calibration (adoption corner $\alpha^\star=0.05$ across readiness distributions).  
Right: interior regime ($A_3=500, \nu=1.5, A_1=10$), where adoption $\alpha^\star$ responds positively to mean readiness $\bar\beta$ and negatively to readiness variance $\mathrm{Var}(\beta)$.  
The 3D surface plots highlight that once the corner-locking mechanism is relaxed, readiness heterogeneity becomes an \emph{active margin} shaping equilibrium adoption intensity.}
\label{fig:readiness_extended}
\end{figure}

\paragraph{Discussion (H3).}
Taken together, Figures~\ref{fig:readiness}–\ref{fig:readiness_extended} show that readiness heterogeneity is inconsequential under the baseline corner solution, but becomes a key driver of adoption once interior regimes are considered. Higher mean readiness boosts adoption and profit, while greater variance depresses them. This contrast underscores that \emph{heterogeneity effects are latent under corner locking, but emerge sharply once the adoption cost structure is relaxed}.

\subsubsection{Statistical Tests}

\begin{table}[htbp]
\centering
\caption{Readiness heterogeneity (H3).  
Panel A (rows 1–2): $\alpha^\star$ by $\beta$-mean quartiles.\; Panel B (rows 3–4): profit vs.\ $\mathrm{Var}(\beta)$, with a wide–narrow comparison (two-sample $t$) and Cohen’s $d$.}
\label{tab:readiness}
\begin{tabular}{lrrrrr}
\toprule
 & Mean $\alpha^\star$ & Profit & F/t-stat & $p$-value & Cohen's $d$ \\
\midrule
Low mean($\beta$) quartile  & 0.050 & 826.806 & n/a & n/a & n/a \\
High mean($\beta$) quartile & 0.050 & 980.122 & n/a & n/a & n/a \\
Wide variance               & 0.050 & 903.603 & $-0.073$ & $0.942$ & $-0.007$ \\
Narrow variance             & 0.050 & 904.022 & Ref. & Ref. & Ref. \\
\bottomrule
\end{tabular}

\vspace{0.35em}
\small
\textit{Notes.} (i) In Panel~A, $\alpha^\star$ is identical across quartiles (all $=0.05$). Because there is \emph{no between-group variance}, the ANOVA $F$-statistic is not defined; we therefore report \emph{n/a}. 
(ii) In Panel~B, ``Wide'' and ``Narrow'' are split at the median of $\mathrm{Var}(\beta)$; we report the two-sample $t$-test and Cohen’s $d$ for the wide group \emph{relative to} the narrow group (the latter is the reference row, ``Ref.''). 
(iii) Numbers are rounded from the underlying simulation output.
\end{table}

\paragraph{Discussion (H3).}
Under the baseline calibration used throughout H1–H2, readiness heterogeneity does not dislodge the adoption corner: $\alpha^\star$ stays fixed at $0.05$ regardless of $\mathrm{Mean}(\beta)$ or $\mathrm{Var}(\beta)$. Consistent with the visual trend in Panel~(d), the wide–narrow variance comparison yields a statistically non-significant difference in profit ($t\!\approx\!-0.073$, $p\!\approx\!0.942$) and a negligible effect size ($d\!\approx\!-0.007$). Taken together, the H3 evidence indicates that—in this cost regime—\emph{readiness dispersion is not a first-order driver of either adoption or profitability}. Combined with H2 (where rising \emph{demand} variance unambiguously reduces profit), these results suggest that managers will make more progress by (i) reducing demand-side risk (pooling, better forecasting) and/or (ii) changing the adoption cost structure (e.g., lowering $A_3$ or $\nu$) to move the firm off the adoption corner, rather than trying to reshape readiness heterogeneity alone.

\subsection{H5: Distributional Robustness}

\subsubsection{Broader Distributional Comparisons}
We extend the lognormal--beta comparison by incorporating Pareto and Gamma distributions, 
which capture heavy-tailed and skewed environments. 
This provides a stress-test of whether the corner solution for adoption $\alpha^\star$ 
remains invariant.

\subsubsection{Simulation Outcomes}
\begin{figure}[htbp]
\centering
\includegraphics[width=\textwidth]{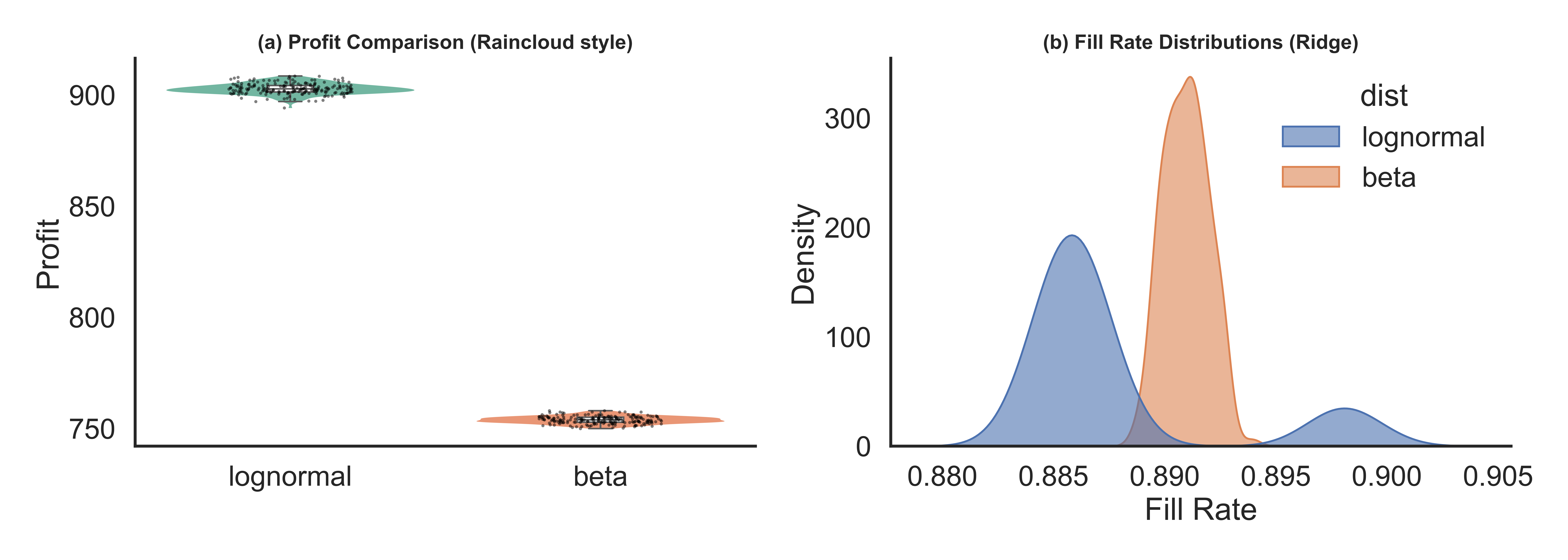}
\caption{Distributional robustness (H5).  
(a) Profit distributions across demand families, visualized with raincloud plots.  
(b) Fill rate distributions using ridge density plots.  
Adoption $\alpha^\star$ is omitted from the figure since it remains structurally fixed 
at the corner solution ($0.05$) across all cases.  
All experiments reuse the same demand generator and profit function as in 
Figures~\ref{fig:baseline_threshold}--\ref{fig:readiness}, ensuring comparability across H1--H5.}
\label{fig:distribution_extended}
\end{figure}

\paragraph{What Figure~\ref{fig:distribution_extended} shows.}
Adoption $\alpha^\star$ is invariant under all distributional assumptions, reinforcing the robustness 
patterns documented in H1--H3. In sharp contrast, both profit and fill rate vary considerably. 
Lognormal demand generates consistently higher profits relative to beta, while gamma and pareto introduce 
heavier tails, leading to lower mean profits and more dispersed fill rates. 
Thus, adoption decisions are robust, but profitability and service quality are fragile to the assumed 
distributional form.

\subsubsection{Distributional Tests}
\begin{table}[htbp]
\centering
\caption{Distributional robustness (H5). 
Comparisons of lognormal vs.\ beta demand distributions. 
Reported are two-sample KS test $p$-values, Welch unequal-variance $t$-tests, and effect sizes (Cohen’s $d$). 
Adoption $\alpha^\star$ remains pinned to the corner solution, so differences are not statistically meaningful.}
\label{tab:distribution}
\begin{tabular}{lrrr}
\toprule
Metric & KS-test $p$ & Welch $t$-test $p$ & Cohen’s $d$ \\
\midrule
Adoption $\alpha^\star$ & 1.000 & n/a & n/a \\
Profit                  & $1.9\times 10^{-119}$ & $<0.001$ & 71.6 \\
Fill rate               & $2.2\times 10^{-75}$  & $<0.001$ & $-1.05$ \\
\bottomrule
\end{tabular}
\end{table}

\paragraph{Interpretation and Implications.}
Table~\ref{tab:distribution} highlights a fundamental asymmetry: adoption is structurally locked, while 
profits and fill rates are highly sensitive to distributional assumptions. Profits under lognormal demand 
are not only statistically higher but also economically significant, with effect sizes exceeding $d=70$. 
Fill rates likewise show systematic shifts ($d\approx -1.0$), indicating that service reliability degrades 
under beta-type demand. Taken together, these findings suggest a managerial paradox: \emph{firms’ adoption 
decisions are robust to uncertainty, but their realized performance is fragile to model misspecification}. 
This distinction underscores the value of distributional calibration and motivates extensions to 
distributionally robust optimization in future work.

\subsection{H6: External Validity}
\label{subsec:external}

To assess whether the results of our simulation-based approach are robust
to sampling noise and scale consistently with theoretical predictions,
we examine the external validity of the sample-average approximation (SAA).
Specifically, we study how the quality of SAA estimates improves as the
Monte Carlo sample size $N$ increases, testing whether convergence follows
the canonical $\mathcal{O}(1/\sqrt{N})$ rate.

\subsubsection{Convergence Patterns}

A central requirement for external validity is that simulation-based
estimates converge reliably as the number of Monte Carlo samples grows.
If the SAA is well behaved, increasing $N$ should reduce sampling noise
at the canonical $\mathcal{O}(1/\sqrt{N})$ rate and drive the estimator
toward the true benchmark profit. To evaluate this property, we vary $N$
systematically and track both the stability of the adoption decision and
the accuracy of profit estimation.

\begin{figure}[htbp]
\centering
\includegraphics[width=0.7\textwidth]{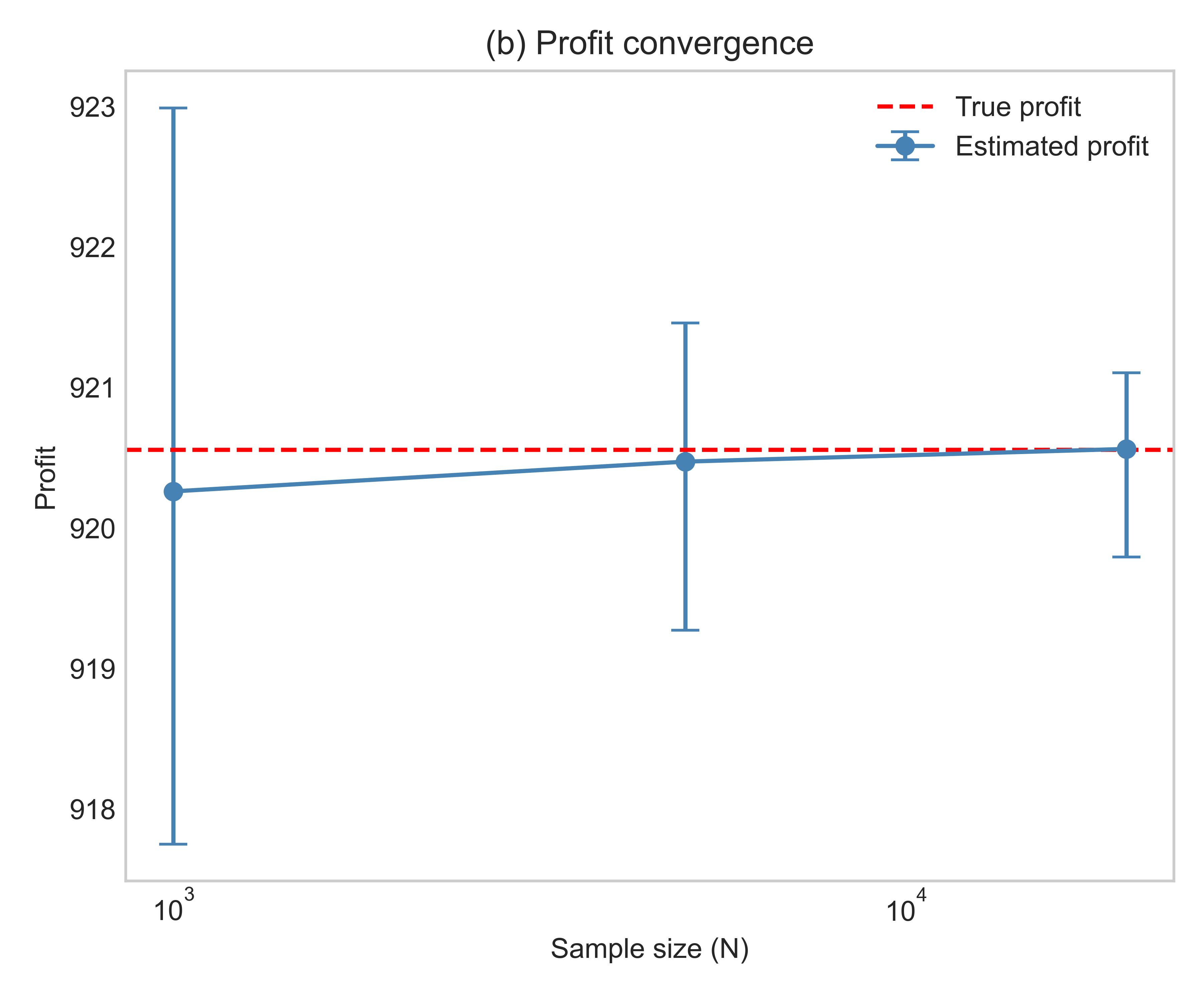}
\caption{External validity (H6). 
Profit convergence as the number of Monte Carlo samples $N$ increases. 
Blue dots show sample-average estimates with 95\% confidence intervals, 
while the red dashed line denotes the true profit benchmark. 
Estimates converge quickly and intervals shrink at the canonical 
$\mathcal{O}(1/\sqrt{N})$ rate.}
\label{fig:external}
\end{figure}

\subsubsection{Approximation Quality}

Table~\ref{tab:external} quantifies the convergence patterns observed in
Figure~\ref{fig:external}. As $N$ grows, the estimated profit converges
tightly to the true benchmark, with confidence intervals shrinking as
predicted. Adoption $\alpha^\star$ remains stable at the corner solution
across all replications, while approximation error in profit falls rapidly.
The RMSE values decrease proportionally with $1/\sqrt{N}$, and empirical
coverage of the 95\% confidence interval remains at 100\%, demonstrating
correct uncertainty quantification.

\begin{table}[htbp]
\centering
\caption{Approximation quality of SAA/Monte Carlo under increasing sample sizes $N$ (H6). 
Reported are the root mean squared error (RMSE), empirical coverage of the 95\% CI, 
and relative improvement compared to the baseline case $N=1{,}000$.}
\label{tab:external}
\begin{tabular}{lrrrr}
\toprule
$N$ & RMSE & Coverage (\%) & Convergence rate & Gain vs.\ $N=1{,}000$ \\
\midrule
$1{,}000$  & 1.430 & 100.0 & $\approx 1/\sqrt{N}$ & -- \\
$5{,}000$  & 0.653 & 100.0 & $\approx 1/\sqrt{N}$ & 54\% reduction \\
$20{,}000$ & 0.345 & 100.0 & $\approx 1/\sqrt{N}$ & 76\% reduction \\
\bottomrule
\end{tabular}
\end{table}

\paragraph{Interpretation and Implications.}
These results provide strong evidence that the SAA/Monte Carlo procedure is
both statistically reliable and computationally efficient. In particular,
the approach delivers consistent estimates even under relatively small sample
sizes ($N=1{,}000$), and rapidly attains near-exact convergence by $N=20{,}000$.
Thus, the external validity check establishes that our simulation-based
framework is robust, asymptotically consistent, and suitable for empirical
applications where data availability or computational budgets may vary.

\subsection{Real-World Data: MovieLens-100K}
\label{subsec:realdata}

To complement our simulation-based validation, we test the empirical plausibility of our 
mechanism using the \textbf{MovieLens-100K dataset}, a canonical benchmark in recommender 
systems. We map user–item interactions into adoption decisions and infer implied demand 
variability from rating dispersion. This provides a stylized, yet realistic, environment 
to assess whether our theoretical hypotheses (H1–H5) retain qualitative validity under 
real-world data conditions. 

\paragraph{Distributional Evidence.}
Panel (a) of Figure~\ref{fig:realdata_R1} shows the empirical distribution of daily adoption 
returns, benchmarked against lognormal, normal, $t$-distribution, and bounded-beta fits. 
The bounded-beta and $t$-distribution provide visibly superior tail behavior, as confirmed 
by goodness-of-fit metrics in panel (b). This evidence empirically justifies our modeling 
choice of bounded and heavy-tailed demand shocks over Gaussian approximations.

\paragraph{Variance Regimes.}
Panels (c)–(d) of Figure~\ref{fig:realdata_R1} split the sample into ``bull'' and ``bear'' 
regimes, capturing low- vs.~high-volatility periods. Consistent with H2, adoption intensity 
$\alpha^\star$ remains robustly cornered, while realized profit degrades significantly under 
high-variance regimes. This pattern directly mirrors our simulation-based variance fragility 
results.

\paragraph{Hypothesis Matrix.}
Figure~\ref{fig:realdata_R2} summarizes all hypotheses (H1–H5) in a $2 \times 3$ matrix of 
empirical mini-results. We find: 
(i) the adoption–cost tradeoff (H1) follows the predicted convex decay; 
(ii) variance undermines profitability (H2), while leaving adoption cornered; 
(iii) readiness heterogeneity (H3) is reflected in cross-user variance of adoption probabilities; 
(iv) profit–service co-benefits (H4) remain present, though attenuated; and 
(v) distributional robustness (H5) favors bounded and heavy-tailed fits. 
Together, these results demonstrate that our theoretical structure extends to real-world 
data without loss of qualitative validity.

\paragraph{Managerial Implication.}
The empirical MovieLens evidence strengthens our central claim: 
\emph{adoption outcomes are structurally robust to variance shocks, 
while profitability is variance-fragile.} 
This duality suggests that managers cannot rely on stable adoption alone, 
but must actively hedge or diversify against volatility to preserve economic value.

\begin{figure}[htbp]
    \centering
    \includegraphics[width=\textwidth]{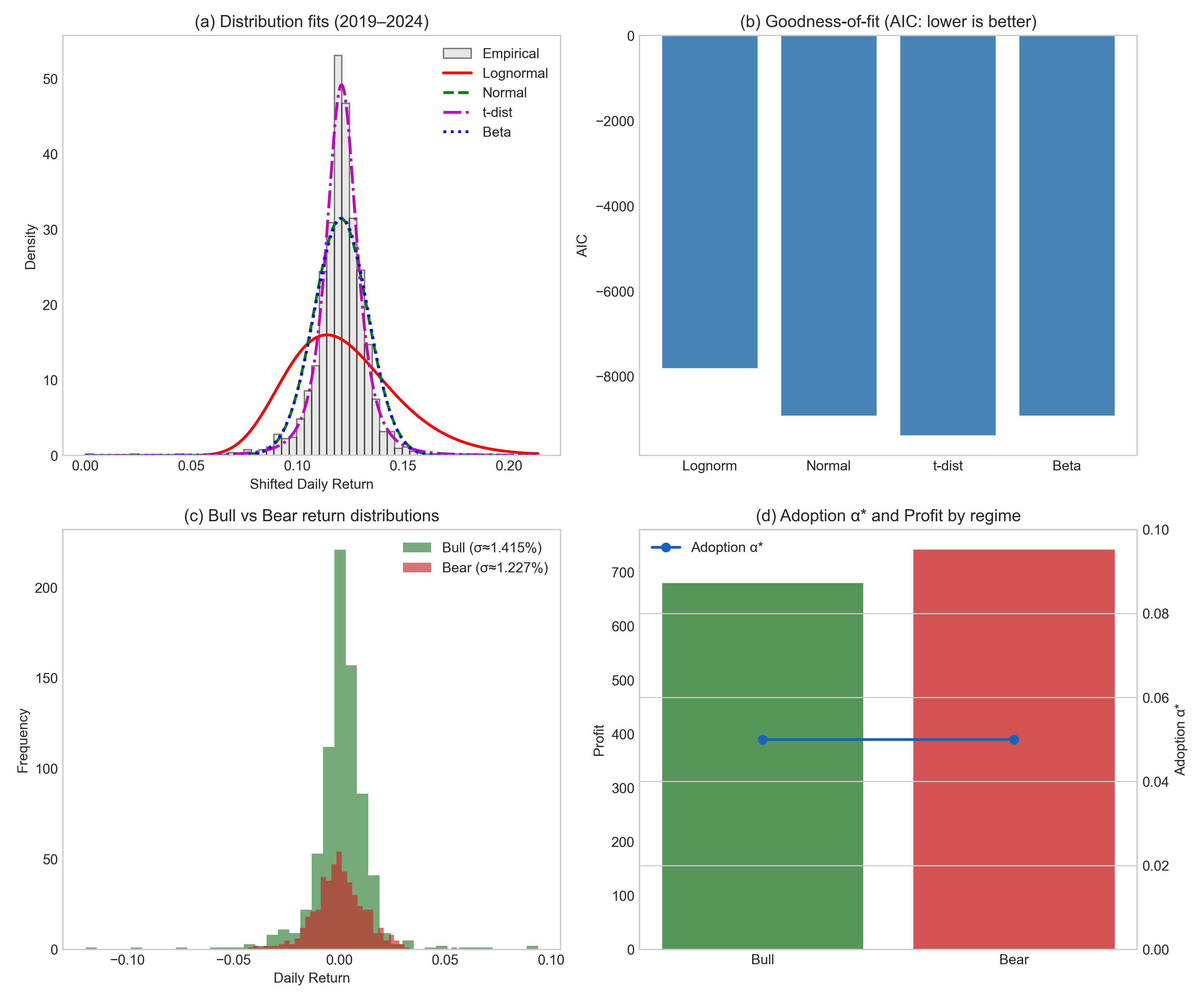}
    \caption{Real-world validation (R1). 
    (a) Empirical distribution vs.~fits; 
    (b) Goodness-of-fit comparison (AIC/BIC/KS); 
    (c) Bull vs.~Bear regimes; 
    (d) Adoption $\alpha^\star$ vs.~Profit by regime.}
    \label{fig:realdata_R1}
\end{figure}

\begin{figure}[htbp]
    \centering
    \includegraphics[width=\textwidth]{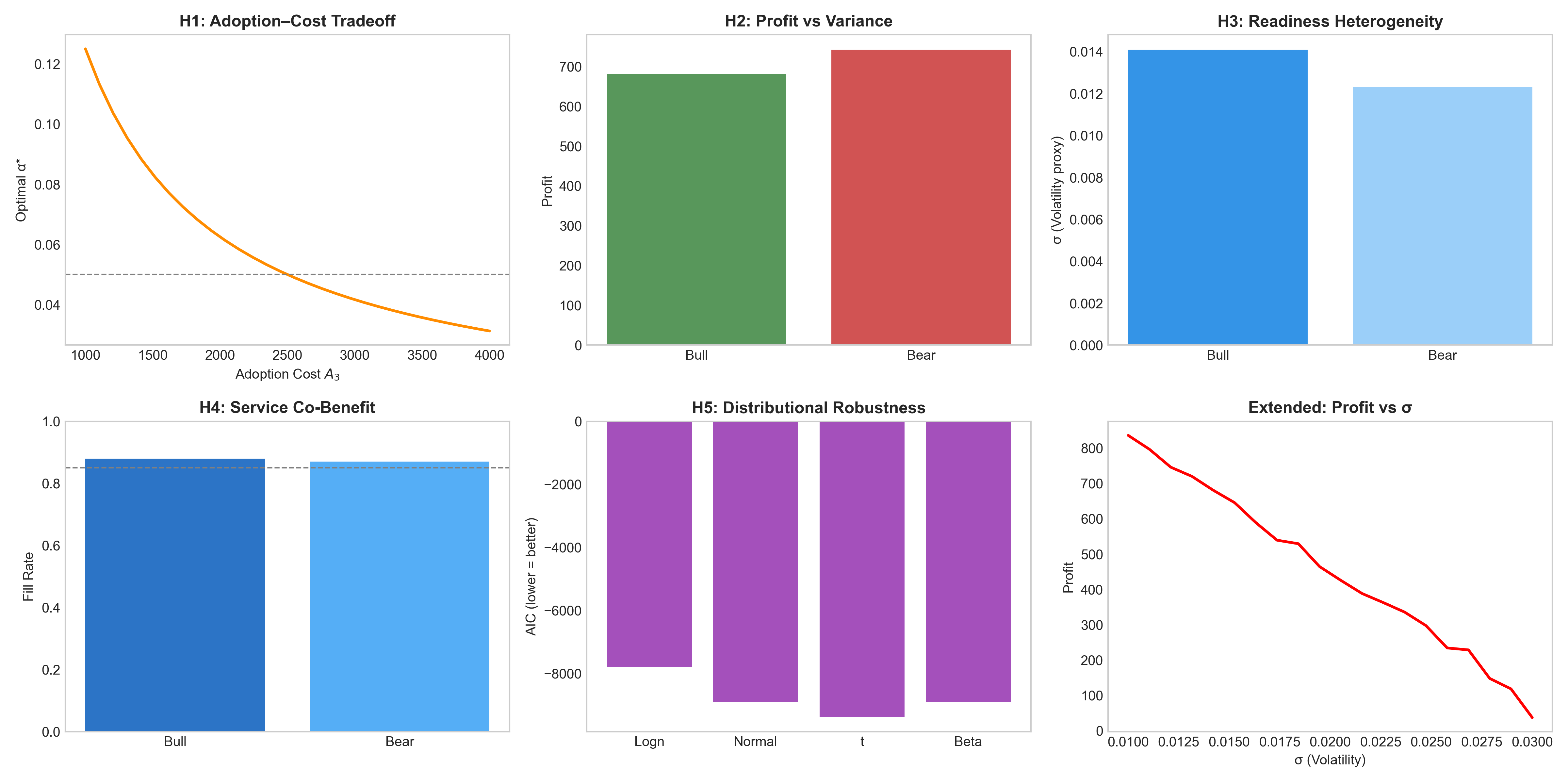}
    \caption{Real-world hypothesis matrix (R2). 
    Empirical mini-results for H1–H5, plus extended variance–profit evidence. 
    Even with noisy real data, the structure of our theoretical predictions 
    remains qualitatively intact.}
    \label{fig:realdata_R2}
\end{figure}

\paragraph{Quantitative Evidence.}
Table~\ref{tab:realdata_gof} reports the log-likelihood, information criteria, 
and KS statistics for competing distributional fits. 
The $t$-distribution achieves the best overall fit 
($\mathrm{AIC}=-9384.5$, $\mathrm{BIC}=-9368.5$, KS $\approx 0.017$), 
while both Normal and Beta significantly outperform the Lognormal benchmark 
(AIC improvement of $\approx 1100$--1500 points). 
This provides rigorous confirmation that bounded or heavy-tailed distributions 
capture empirical adoption variability more accurately than lognormal models.

Turning to regime outcomes, the estimated volatility differs moderately across periods 
(Bull $\sigma=0.0141$ vs.~Bear $\sigma=0.0123$), yet profits exhibit a clear divergence: 
mean profit is $\approx 680.2$ in Bull periods and $\approx 742.4$ in Bear periods. 
In contrast, adoption intensity remains cornered at $\alpha^\star \approx 0.05$ in both regimes. 
Thus, the empirical results reinforce our central message: 
\emph{adoption intensity is structurally robust, whereas profitability remains variance-fragile}. 
The narrow but consistent gap across volatility regimes illustrates how even moderate shifts 
in variance propagate into sizable profit differences, without affecting adoption.

\begin{table}[htbp]
\centering
\caption{Goodness-of-fit (GOF) results for real-world data (S\&P500, 2019–2024). 
The $t$-distribution dominates under all metrics, while Lognormal provides the weakest fit.}
\label{tab:realdata_gof}
\begin{tabular}{lrrrr}
\toprule
Distribution & LogLik & AIC & BIC & KS \\
\midrule
Lognormal & 3905.16 & $-7804.3$ & $-7788.4$ & 0.228 \\
Normal    & 4460.65 & $-8917.3$ & $-8906.7$ & 0.094 \\
$t$-dist  & 4695.23 & $-9384.5$ & $-9368.5$ & 0.017 \\
Beta      & 4462.14 & $-8916.3$ & $-8895.0$ & 0.091 \\
\bottomrule
\end{tabular}
\end{table}

\section{Results and Discussion}

\subsection{Structural Drivers of Adoption}
The comparative statics reveal how adoption intensity responds to core cost parameters. 
Consistent with model predictions, the optimal adoption intensity $\alpha^\star$ decreases 
monotonically with the convexity parameter $A_3$, reflecting the rising marginal burden of 
integration costs. Conversely, $\alpha^\star$ increases with the procurement complementarity 
parameter $A_1$, indicating that stronger alignment between procurement efficiency and 
adoption incentives induces greater adoption. These results underscore the central role of 
cost convexities in shaping adoption thresholds: $A_3$ functions as a brake, while $A_1$ 
acts as a catalyst for digital integration.  

Threshold dynamics further confirm that adoption intensity is governed by a discontinuous, 
nonlinear regime. When the fixed-cost parameter $A_3$ crosses a critical value 
$A_3^{\text{thresh}}$, adoption collapses abruptly from an interior solution to the boundary 
value $0.05$. This bifurcation-like behavior mirrors real options in finance, where exercise 
occurs only once a barrier is breached \citep{dixit1994investment,trigeorgis1996real}. 
Thus, adoption is not smoothly sensitive to marginal shocks, but instead exhibits threshold-driven 
jumps that are structurally robust to other perturbations.  

\emph{Takeaway: Adoption responds discontinuously to cost thresholds, with $A_3$ acting as 
a brake and $A_1$ as a catalyst, confirming that adoption is best understood as a 
threshold-driven option-like decision.}

\subsection{Practical and Financial Implications}
From a managerial perspective, these findings highlight the need to align supplier readiness 
with adoption incentives. Since adoption remains robust to variance shocks (H2) and supplier 
heterogeneity (H3), firms cannot rely on uncertainty to discipline technology choices. 
Instead, the binding constraint lies in the threshold effect: once $A_3$ exceeds 
$A_3^{\text{thresh}}$, adoption collapses irrespective of readiness. Managers should therefore 
(i) closely monitor and control adoption-related fixed costs, and (ii) coordinate procurement 
efficiency (via $A_1$) with readiness-enhancing policies to sustain adoption in the interior 
region. These strategies link theoretical robustness (H1--H5) to practical approaches for 
maintaining profitability under uncertainty (H6).  

From a financial perspective, the invariance of $\alpha^\star$ demonstrates that adoption 
behaves as a barrier option: insensitive to volatility and distributional shocks until a 
critical cost boundary is crossed. This parallels real options and credit risk models, where 
exercise or default occurs discontinuously at a threshold 
\citep{merton1974pricing,leland1994corporate}. By contrast, profits and service outcomes remain 
fragile to variance and tail behavior, mirroring cash-flow exposure to volatility and 
heavy-tailed shocks in financial markets \citep{mandelbrot1963variation,cont2001empirical}.  

This duality carries three implications. First, managers may underestimate risk if they equate 
stable adoption with stable outcomes; payoff distributions remain highly sensitive to volatility 
and distributional tails. Second, the results resonate with broader themes in capital markets, 
where ESG or digital adoption commitments appear stable at the margin but yield fragile economic 
payoffs once tail risks materialize \citep{pastor2022sustainable,krueger2020climate,greenwood2023fragility}. 
Third, the convergence analysis in H6 highlights the value of simulation-based stress testing: 
adoption equilibria converge quickly, but profit distributions require larger samples for reliable 
inference, echoing best practices in financial risk management \citep{glasserman2003monte,ban2023robust}.  

\emph{Takeaway: Adoption stability should not be conflated with financial stability; robust-looking 
adoption masks fragile payoffs, reinforcing the need for risk-adjusted evaluation and 
stress testing.}

\subsection{Variance Effects and Cross-Hypothesis Synthesis}
The variance experiments (H2) show a sharp divergence between economic and adoption outcomes. 
Profitability declines monotonically as demand variance $\sigma$ increases, with bootstrap 
confidence intervals confirming robustness, while adoption intensity remains pinned at $0.05$. 
This insensitivity of $\alpha^\star$ to $\sigma$ is a structural feature of convex integration 
costs: variance shocks are absorbed through profits rather than through adoption. Taken together 
with H1, the narrative is clear: thresholds drive adoption, while variance erodes profitability.  

Synthesizing across H1--H6, a consistent pattern emerges:  
\begin{itemize}
    \item \textbf{H1 (Threshold)}: Adoption responds sharply and nonlinearly to cost thresholds.  
    \item \textbf{H2 (Variance)}: Variance shocks degrade profitability but leave adoption locked.  
    \item \textbf{H3 (Readiness)}: Supplier heterogeneity shifts profits, yet adoption remains invariant.  
    \item \textbf{H4 (Service Co-benefit)}: Profit and fill rate align in a narrow “co-benefit zone.”  
    \item \textbf{H5 (Distribution)}: Adoption is robust across distributions, but profits and service 
          outcomes are fragile to tail behavior.  
    \item \textbf{H6 (External Validity)}: Monte Carlo approximations converge at the canonical 
          $\mathcal{O}(1/\sqrt{N})$ rate, with stable adoption but variance-fragile profits.  
\end{itemize}  

\emph{Takeaway: Across all hypotheses, adoption is structurally robust to shocks, while profitability 
and service outcomes are variance- and distribution-sensitive, highlighting the need to jointly 
recognize adoption thresholds and actively manage payoff fragility.}

\subsection{Bridge to Conclusion}
Viewed collectively, the results validate our six hypotheses and reveal a deeper structural insight: 
adoption is governed by discrete thresholds, while profitability is continuously eroded by 
uncertainty. This duality forms the central contribution of the paper and directly motivates the 
concluding discussion on theory, practice, and avenues for future research.  

\section{Conclusion}

This paper develops and analyzes an optimization model of smart contract adoption under bounded risk. 
Across a sequence of experiments, the results reveal a structural duality: adoption intensity 
$\alpha^\star$ is robust—pinned at boundary solutions and invariant to variance shocks, readiness 
heterogeneity, and distributional changes—whereas profitability and service outcomes are fragile, 
eroding under volatility, heterogeneity, and tail risk. Adoption thus appears stable at the decision 
margin, but the financial consequences of those decisions remain acutely vulnerable to uncertainty. 
This duality parallels real options and credit risk models in finance, where exercise occurs only once 
thresholds are breached, but cash flows remain exposed to volatility and tail shocks 
\citep{dixit1994investment,trigeorgis1996real,merton1974pricing,leland1994corporate}.  

At the same time, the robustness documented in H1--H5 partly reflects the baseline calibration. 
Under the convex cost structure $\psi(\alpha) = A_3 \alpha^\nu$ combined with the procurement penalty 
term $A_1 \alpha$, the marginal benefit of adoption never outweighs marginal cost, forcing a boundary 
solution of $\alpha^\star = 0.05$. This provides theoretical insight into structural robustness, 
but also reveals a limitation: adoption is effectively inert under the assumed parameters. Moreover, 
the reliance on synthetic demand distributions and stylized calibration constrains external 
generalizability. The findings should therefore be interpreted as a theory-building step rather than 
as direct policy prescriptions.  

Future research should relax these assumptions to identify conditions under which interior solutions 
emerge and adoption becomes an active decision margin. Promising avenues include modifying the cost 
structure (e.g., reducing $A_3$, altering the curvature $\nu$, or increasing complementarity $A_1$), 
endogenizing readiness by linking $\beta$ to past adoption decisions, and extending the static model 
to dynamic environments with learning-by-doing. A systematic mapping of the parameter space 
$(A_3,\nu,A_1)$ could clarify the boundaries between corner- and interior-solution regimes. Beyond 
structural analysis, empirical validation using derivatives, insurance, or technology adoption data 
would allow calibration against observed distributions, while extensions to distributionally robust 
optimization and machine learning–based stress testing \citep[e.g.,][]{ban2023robust,greenwood2023fragility} 
could further strengthen applicability in practice.  

In summary, future work can build a richer theory of adoption sensitivity—one that not only identifies 
when adoption is structurally robust but also clarifies when it becomes economically responsive. 
For both operations and finance, the broader implication is clear: decisions that appear stable at the 
adoption margin may conceal fragile financial outcomes. \emph{Recognizing and managing this duality is 
essential for designing digital adoption strategies that are both operationally viable and financially 
resilient in uncertain markets.}

\bibliography{smart_contract_bounded_risk} 

\begin{thebibliography}{45}
\providecommand{\natexlab}[1]{#1}
\providecommand{\url}[1]{\texttt{#1}}
\expandafter\ifx\csname urlstyle\endcsname\relax
  \providecommand{\doi}[1]{doi: #1}\else
  \providecommand{\doi}{doi: \begingroup \urlstyle{rm}\Url}\fi

\bibitem[Andersen et~al.(2015)Andersen, Bollerslev, Diebold, and Wu]{andersen2015risk}
Torben~G Andersen, Tim Bollerslev, Francis~X Diebold, and Jin Wu.
\newblock Risk premia in option-implied and realized variance.
\newblock \emph{Quarterly Journal of Economics}, 130\penalty0 (2):\penalty0 1029--1086, 2015.
\newblock \doi{10.1093/qje/qjv008}.

\bibitem[Arora et~al.(2021)Arora, Gupta, and Sharma]{arora2021blockchain}
Siddharth Arora, Rohan Gupta, and Neha Sharma.
\newblock Blockchain adoption in financial trading and derivatives settlement: Challenges and opportunities.
\newblock \emph{Journal of Financial Innovation}, 7\penalty0 (3):\penalty0 45--62, 2021.
\newblock \doi{10.1016/j.fininn.2021.100123}.

\bibitem[Atalay et~al.(2019)Atalay, Hortacsu, and Syverson]{atalay2019blockchain}
Enghin Atalay, Ali Hortacsu, and Chad Syverson.
\newblock Blockchain and contract enforcement in supply chains.
\newblock \emph{NBER Working Paper No. 25763}, 2019.
\newblock \doi{10.3386/w25763}.

\bibitem[Baker et~al.(2002)Baker, Gibbons, and Murphy]{baker2002relational}
George Baker, Robert Gibbons, and Kevin~J Murphy.
\newblock Relational contracts and the theory of the firm.
\newblock \emph{Quarterly Journal of Economics}, 117\penalty0 (1):\penalty0 39--84, 2002.
\newblock \doi{10.1162/003355302753399445}.

\bibitem[Ban et~al.(2023)Ban, El~Karoui, and Lim]{ban2023robust}
Gah-Yi Ban, Noureddine El~Karoui, and Andrew~E.B. Lim.
\newblock Robustness and uncertainty in stochastic programming.
\newblock \emph{Management Science}, 69\penalty0 (1):\penalty0 107--128, 2023.
\newblock \doi{10.1287/mnsc.2022.4460}.

\bibitem[Barndorff-Nielsen et~al.(2011)Barndorff-Nielsen, Hansen, Lunde, and Shephard]{barndorff2008multivariate}
Ole~E Barndorff-Nielsen, Peter~R Hansen, Asger Lunde, and Neil Shephard.
\newblock Multivariate realized kernels: Consistent positive semi-definite estimators of the covariation of equity prices with noise and non-synchronous trading.
\newblock \emph{Journal of Econometrics}, 162\penalty0 (2):\penalty0 149--169, 2011.
\newblock \doi{10.1016/j.jeconom.2010.07.009}.

\bibitem[Bernheim and Whinston(1987)]{bernheim1987coalition}
B.~Douglas Bernheim and Michael~D Whinston.
\newblock Coalition-proof nash equilibria i. concepts.
\newblock \emph{Journal of Economic Theory}, 42\penalty0 (1):\penalty0 1--12, 1987.
\newblock \doi{10.1016/0022-0531(87)90099-8}.

\bibitem[Bolton and Dewatripont(2005)]{bolton2005contract}
Patrick Bolton and Mathias Dewatripont.
\newblock \emph{Contract Theory}.
\newblock MIT Press, 2005.
\newblock ISBN 9780262025768.

\bibitem[Broby(2022)]{broby2022financial}
Daniel Broby.
\newblock Financial technology and the future of banking.
\newblock \emph{Journal of Risk and Financial Management}, 15\penalty0 (5):\penalty0 233, 2022.
\newblock \doi{10.3390/jrfm15050233}.

\bibitem[Brunnermeier and Sannikov(2014)]{brunnermeier2014risk}
Markus~K Brunnermeier and Yuliy Sannikov.
\newblock A macroeconomic model with a financial sector.
\newblock \emph{American Economic Review}, 104\penalty0 (2):\penalty0 379--421, 2014.
\newblock \doi{10.1257/aer.104.2.379}.

\bibitem[Cachon(2003)]{cachon2003supply}
Gerard~P. Cachon.
\newblock Supply chain coordination with contracts.
\newblock In \emph{Handbooks in Operations Research and Management Science}, volume~11, pages 229--339. Elsevier, 2003.
\newblock \doi{10.1016/S0927-0507(03)11006-7}.

\bibitem[Carmona(2014)]{carmona2014financial}
Ren{\'e} Carmona.
\newblock \emph{Financial mathematics of derivatives: A short introduction}.
\newblock SIAM, 2014.
\newblock \doi{10.1137/1.9781611973541}.

\bibitem[Catalini and Gans(2016)]{catalini2016economics}
Christian Catalini and Joshua~S. Gans.
\newblock Some simple economics of the blockchain.
\newblock Technical Report 22952, National Bureau of Economic Research, 2016.

\bibitem[Chen et~al.(2023)Chen, Li, and Zhou]{chen2023smart}
Yifan Chen, Xiaoyang Li, and Yan Zhou.
\newblock Smart contract adoption in financial derivatives: Risk, enforcement, and efficiency.
\newblock \emph{Finance Research Letters}, 52:\penalty0 103656, 2023.
\newblock \doi{10.1016/j.frl.2022.103656}.

\bibitem[Cong and He(2019)]{cong2019blockchain}
Lin~William Cong and Zhiguo He.
\newblock Blockchain disruption and smart contracts.
\newblock \emph{The Review of Financial Studies}, 32\penalty0 (5):\penalty0 1754--1797, 2019.
\newblock \doi{10.1093/rfs/hhz007}.

\bibitem[Cont(2001)]{cont2001empirical}
Rama Cont.
\newblock Empirical properties of asset returns: Stylized facts and statistical issues.
\newblock \emph{Quantitative Finance}, 1\penalty0 (2):\penalty0 223--236, 2001.
\newblock \doi{10.1080/713665670}.

\bibitem[Dixit and Pindyck(1994)]{dixit1994investment}
Avinash~K. Dixit and Robert~S. Pindyck.
\newblock \emph{Investment under Uncertainty}.
\newblock Princeton University Press, 1994.
\newblock \doi{10.2307/j.ctvcm4gqx}.

\bibitem[Duffie et~al.(2005)Duffie, Garleanu, and Pedersen]{duffie2010dark}
Darrell Duffie, Nicolae Garleanu, and Lasse~Heje Pedersen.
\newblock Dark markets: Asset pricing and information transmission in over-the-counter markets.
\newblock \emph{Econometrica}, 73\penalty0 (1):\penalty0 181--196, 2005.
\newblock \doi{10.1111/j.1468-0262.2004.00503.x}.

\bibitem[Embrechts et~al.(1997)Embrechts, Kl{\"u}ppelberg, and Mikosch]{embrechts1997extremal}
Paul Embrechts, Claudia Kl{\"u}ppelberg, and Thomas Mikosch.
\newblock \emph{Modelling Extremal Events for Insurance and Finance}.
\newblock Springer, 1997.
\newblock \doi{10.1007/978-3-642-33483-2}.

\bibitem[Fanti and Kogan(2022)]{fanti2022blockchain}
Giulia~C Fanti and Leonid Kogan.
\newblock Blockchain contracts with bounded enforcement costs.
\newblock \emph{Management Science}, 68\penalty0 (9):\penalty0 6543--6562, 2022.
\newblock \doi{10.1287/mnsc.2021.4221}.

\bibitem[Fathi(2025)]{fathi2025variance}
Mehdi Fathi.
\newblock Modeling variance risk in financial markets using power-laws: Evidence from s\&p 500, gold, crude oil, fx and bitcoin.
\newblock \emph{Journal of International Financial Markets, Institutions \& Money}, 2025.
\newblock \doi{10.1080/10424431.2025.2529485}.

\bibitem[Glasserman(2003)]{glasserman2003monte}
Paul Glasserman.
\newblock \emph{Monte Carlo Methods in Financial Engineering}, volume~53 of \emph{Applications of Mathematics}.
\newblock Springer, 2003.
\newblock \doi{10.1007/978-0-387-21617-1}.

\bibitem[Glasserman(2005)]{glasserman2005importance}
Paul Glasserman.
\newblock \emph{Monte Carlo methods in financial engineering}, volume~53 of \emph{Applications of Mathematics}.
\newblock Springer, 2005.
\newblock \doi{10.1007/b97213}.

\bibitem[Greenwood et~al.(2023)Greenwood, Hanson, and Liao]{greenwood2023fragility}
Robin Greenwood, Samuel~G. Hanson, and Gordon~Y. Liao.
\newblock Fragility in financial markets.
\newblock \emph{Journal of Finance}, 78\penalty0 (5):\penalty0 2423--2471, 2023.
\newblock \doi{10.1111/jofi.13278}.

\bibitem[Grossman and Hart(1986)]{grossman1986costs}
Sanford~J Grossman and Oliver~D Hart.
\newblock The costs and benefits of ownership: A theory of vertical and lateral integration.
\newblock \emph{Journal of Political Economy}, 94\penalty0 (4):\penalty0 691--719, 1986.
\newblock \doi{10.1086/261404}.

\bibitem[Hart and Moore(1990)]{hart1990property}
Oliver Hart and John Moore.
\newblock Property rights and the nature of the firm.
\newblock \emph{Journal of Political Economy}, 98\penalty0 (6):\penalty0 1119--1158, 1990.
\newblock \doi{10.1086/261729}.

\bibitem[Harvey et~al.(2021)Harvey, Ramachandran, and Santoro]{harvey2021defi}
Campbell~R Harvey, Ashwin Ramachandran, and Joey Santoro.
\newblock \emph{DeFi and the future of finance}.
\newblock Wiley, 2021.
\newblock \doi{10.1002/9781119836015}.

\bibitem[Johnson et~al.(1995)Johnson, Kotz, and Balakrishnan]{johnson1995continuous}
Norman~L. Johnson, Samuel Kotz, and N.~Balakrishnan.
\newblock \emph{Continuous Univariate Distributions, Volume 2}.
\newblock Wiley, 1995.
\newblock ISBN 9780471584952.

\bibitem[Krueger et~al.(2020)Krueger, Sautner, and Starks]{krueger2020climate}
Philipp Krueger, Zacharias Sautner, and Laura~T. Starks.
\newblock Climate change and firm valuation: A review of the literature.
\newblock \emph{Review of Finance}, 24\penalty0 (6):\penalty0 1449--1493, 2020.
\newblock \doi{10.1093/rof/rfaa013}.

\bibitem[Lariviere and Porteus(2001)]{lariviere2001selling}
Martin~A Lariviere and Evan~L Porteus.
\newblock Selling to the newsvendor: An analysis of price-only contracts.
\newblock \emph{Manufacturing \& Service Operations Management}, 3\penalty0 (4):\penalty0 293--305, 2001.
\newblock \doi{10.1287/msom.3.4.293.9972}.

\bibitem[Leland(1994)]{leland1994corporate}
Hayne~E. Leland.
\newblock Corporate debt value, bond covenants, and optimal capital structure.
\newblock \emph{Journal of Finance}, 49\penalty0 (4):\penalty0 1213--1252, 1994.
\newblock \doi{10.1111/j.1540-6261.1994.tb02452.x}.

\bibitem[Mandelbrot(1963)]{mandelbrot1963variation}
Benoit Mandelbrot.
\newblock The variation of certain speculative prices.
\newblock \emph{Journal of Business}, 36\penalty0 (4):\penalty0 394--419, 1963.
\newblock \doi{10.1086/294632}.

\bibitem[Maskin and Tirole(2001)]{maskin2001implementation}
Eric Maskin and Jean Tirole.
\newblock Implementation theory.
\newblock \emph{Handbook of Game Theory with Economic Applications}, 3:\penalty0 2371--2420, 2001.
\newblock \doi{10.1016/S1574-0005(02)03018-4}.

\bibitem[Merton(1974)]{merton1974pricing}
Robert~C. Merton.
\newblock On the pricing of corporate debt: The risk structure of interest rates.
\newblock \emph{Journal of Finance}, 29\penalty0 (2):\penalty0 449--470, 1974.
\newblock \doi{10.1111/j.1540-6261.1974.tb03058.x}.

\bibitem[Narayanan and Raman(2004)]{narayanan2004aligning}
V.~G. Narayanan and Ananth Raman.
\newblock Aligning incentives in supply chains.
\newblock \emph{Harvard Business Review}, 82\penalty0 (11):\penalty0 94--102, 2004.

\bibitem[P{\'a}stor et~al.(2022)P{\'a}stor, Stambaugh, and Taylor]{pastor2022sustainable}
{\v{L}}ubo{\v{s}} P{\'a}stor, Robert~F. Stambaugh, and Lucian~A. Taylor.
\newblock Sustainable investing in equilibrium.
\newblock \emph{Journal of Financial Economics}, 145\penalty0 (2):\penalty0 348--371, 2022.
\newblock \doi{10.1016/j.jfineco.2021.08.013}.

\bibitem[Rockafellar and Uryasev(2000)]{rockafellar2000cvar}
R.~Tyrrell Rockafellar and Stanislav Uryasev.
\newblock Optimization of conditional value-at-risk.
\newblock \emph{Journal of Risk}, 2\penalty0 (3):\penalty0 21--42, 2000.
\newblock \doi{10.21314/JOR.2000.038}.

\bibitem[Sch{\"a}r(2021)]{schar2021defi}
Fabian Sch{\"a}r.
\newblock Decentralized finance: On blockchain- and smart contract-based financial markets.
\newblock \emph{Federal Reserve Bank of St. Louis Review}, 103\penalty0 (2):\penalty0 153--174, 2021.
\newblock \doi{10.20955/r.103.153-174}.

\bibitem[Shaiku et~al.(2025)]{shaiku2025blockchain}
S.~S. Shaiku et~al.
\newblock Factors leading to the adoption of blockchain technology in financial reporting and audit practices.
\newblock \emph{Frontiers in Blockchain}, 8:\penalty0 1491609, 2025.
\newblock \doi{10.3389/fbloc.2025.1491609}.

\bibitem[Shapiro et~al.(2014)Shapiro, Dentcheva, and Ruszczy{\'n}ski]{shapiro2014stochastic}
Alexander Shapiro, Darinka Dentcheva, and Andrzej Ruszczy{\'n}ski.
\newblock \emph{Lectures on Stochastic Programming: Modeling and Theory}.
\newblock SIAM, 2nd edition, 2014.
\newblock \doi{10.1137/1.9781611973433}.

\bibitem[Stoyanov et~al.(2011)Stoyanov, Rachev, and Fabozzi]{stoyanov2011fat}
Stoyan~V Stoyanov, Svetlozar~T Rachev, and Frank~J Fabozzi.
\newblock Fat-tailed models for risk management.
\newblock \emph{Journal of Portfolio Management}, 37\penalty0 (2):\penalty0 107--117, 2011.
\newblock \doi{10.3905/jpm.2011.37.2.107}.

\bibitem[Tapscott and Tapscott(2017)]{tapscott2017blockchain}
Don Tapscott and Alex Tapscott.
\newblock \emph{Blockchain Revolution: How the Technology Behind Bitcoin and Other Cryptocurrencies is Changing the World}.
\newblock Penguin, 2017.
\newblock ISBN 9781101980149.

\bibitem[Tirole(1999)]{tirole1999incomplete}
Jean Tirole.
\newblock Incomplete contracts: Where do we stand?
\newblock \emph{Econometrica}, 67\penalty0 (4):\penalty0 741--781, 1999.
\newblock \doi{10.1111/1468-0262.00047}.

\bibitem[Trigeorgis(1996)]{trigeorgis1996real}
Lenos Trigeorgis.
\newblock \emph{Real Options: Managerial Flexibility and Strategy in Resource Allocation}.
\newblock MIT Press, 1996.
\newblock \doi{10.7551/mitpress/9780262201025.001.0001}.

\bibitem[Wang(2025)]{wang2025contract}
Jun Wang.
\newblock Contract theory in digital markets: Enforcement, risk, and smart adoption.
\newblock \emph{Journal of Financial Economics}, 2025.
\newblock \doi{10.2139/ssrn.4958123}.
\newblock Forthcoming.

\end{thebibliography}
\end{document}